\documentclass{article}
\usepackage{multirow}

\usepackage{arxiv}

\usepackage[utf8]{inputenc} 
\usepackage[T1]{fontenc}    
\usepackage{hyperref}       
\usepackage{url}            
\usepackage{booktabs}       
\usepackage{amsfonts}       
\usepackage{nicefrac}       
\usepackage{microtype}      
\usepackage{lipsum}		
\usepackage{graphicx}
\usepackage{doi}
\usepackage{amsmath}
\usepackage{tikz}
\usepackage{enumitem}
\usepackage[normalem]{ulem}
\usepackage{balance}
\usepackage{caption}
\usepackage{color}
\usepackage[a-1b]{pdfx}
\usepackage{subcaption}
\usepackage{amsthm}
\usepackage[ruled,vlined]{algorithm2e}

\newtheorem{theorem}{Theorem}
\newtheorem{lemma}{Lemma}
\theoremstyle{definition}
\newtheorem{definition}{Definition}
\theoremstyle{remark}

\theoremstyle{remark}

\theoremstyle{remark}
\newtheorem{corollary}{Corollary}

\newcommand{\bplustree}{B$^{+}$-tree}
\newcommand{\sys}{FFBtree}

\SetKwComment{Comment}{/* }{ */}
\title{Towards a \bplustree{} with Fluctuation-Free Performance}

\author{Lu Xing\\
	Department of Computer Science\\
	Purdue University\\
	West Lafayette, IN 47906 \\
	\texttt{xingl@purdue.edu} \\
	\AND
    Walid G. Aref\\
	Department of Computer Science\\
	Purdue University\\
	West Lafayette, IN 47906 \\
	\texttt{aref@purdue.edu} \\
}

\date{}

\begin{document}
\maketitle

\begin{abstract}
Performance predictability is critical for modern DBMSs. Index maintenance can trigger rare but severe I/O spikes. In a  B/\bplustree{} with Height $H$, due to {\em node split propagation},  the cost of a single insert can 
vary 
from $H\!+\!1$ to $3H\!+\!1$ 
I/Os when split propagation reaches the root, nearly a $3\times$ degradation. We formalize performance fluctuation as the gap between the best- and worst-case insert behaviors, and introduce the notions of {\em safe} and {\em critical} nodes to capture when splits become unavoidable. We introduce the \sys{}, a \bplustree{} insert algorithm that preemptively splits some {\em critical} nodes. We prove that when navigating down from root to leaf, the insert algorithm is guaranteed to  encounter at most one such critical node that needs to be split, and hence ensuring that no split propagation can take place. This produces fluctuation-free performance for the \sys{}. Our implementation maintains critical-node metadata efficiently and integrates with optimistic lock coupling for concurrency. Experiments with simulated indexes show that the \sys{} caps I/O fluctuation by eliminating split propagation and consistently reduces insert spikes relative to conventional baselines; real-index experiments confirm comparable improvements.
\end{abstract}

\section{Introduction}

Modern database management systems (DBMSs) must deliver not only high transaction throughput but also predictable performance. As Dean and Barroso note~\cite{dean2013tailatscale}, "Large online services need to create a predictably responsive whole out of less-predictable parts." Consider an e-commerce platform: even infrequent delays in processing user transactions can noticeably degrade user experience. Likewise, when a service provider commits to a service-level objective (SLO)~\cite{jones2016service}, a small number of unexpected spikes may be enough to trigger violations, harming both the provider and the customer. Variance is further amplified at scale~\cite{dean2013tailatscale}. 
In addition, in real-time applications, e.g., autonomous cars and robots, a cascade of software components that have fluctuating performance can lead to disasters, e.g., crashes due to delays in timely response. 
For these reasons, evaluating DBMS performance should go beyond averages and explicitly target low-variance and stable latencies. 

Latency spikes often arise from rare yet expensive internal events in a DBMS. Index maintenance provides a representative example~\cite{dean2013tailatscale,chen2020utree,balmau2019silk,glombiewski2019waves,xing2021vldb}. 
The B/\bplustree{}~\cite{bayer1970organization}, one of the most widely used indexing structures, maintains balance by splitting nodes upon overflow and inserting separator keys into parent nodes. In the worst case, a split can trigger {\it split propagation}, cascading from the leaf nodes to be split all the way up to the root of the tree. This cascade is costly because it allocates new pages and forces additional page writes, potentially incurring multiple additional disk I/Os. It can also perturb the buffer pool (e.g., evictions and write-back pressure), indirectly slowing down concurrent reads and amplifying user-visible latency. 
Figure~\ref{fig:latency-spikes} shows the number of I/Os and the per-height fluctuations of a simulated \bplustree{}. A simulated \bplustree{} is chosen because we want to simulate a deeper tree. The $x$-axis is the number of insertions, and the $y$-axis is the number of I/Os per insertion. To account for structural growth, the baseline is summarized with height-aware P50 (50th percentile that 50\% of insertions have I/O less than or equal to this value) and P95 (95th percentile), computed separately at each tree height. The insertions are grouped into fixed-size bins (1200 total bins) for better visualization. The bin-wise maximum I/O fluctuates, indicating persistent high-cost outliers throughout the run. We also plot the per-height difference between maximum and minimum I/O. This difference stays large across heights and keeps increasing as tree grows deeper. Overall, the figure highlights both frequent peak events and unbounded fluctuation.

\begin{figure}[h]
    \centering
    \includegraphics[width=0.85\linewidth]{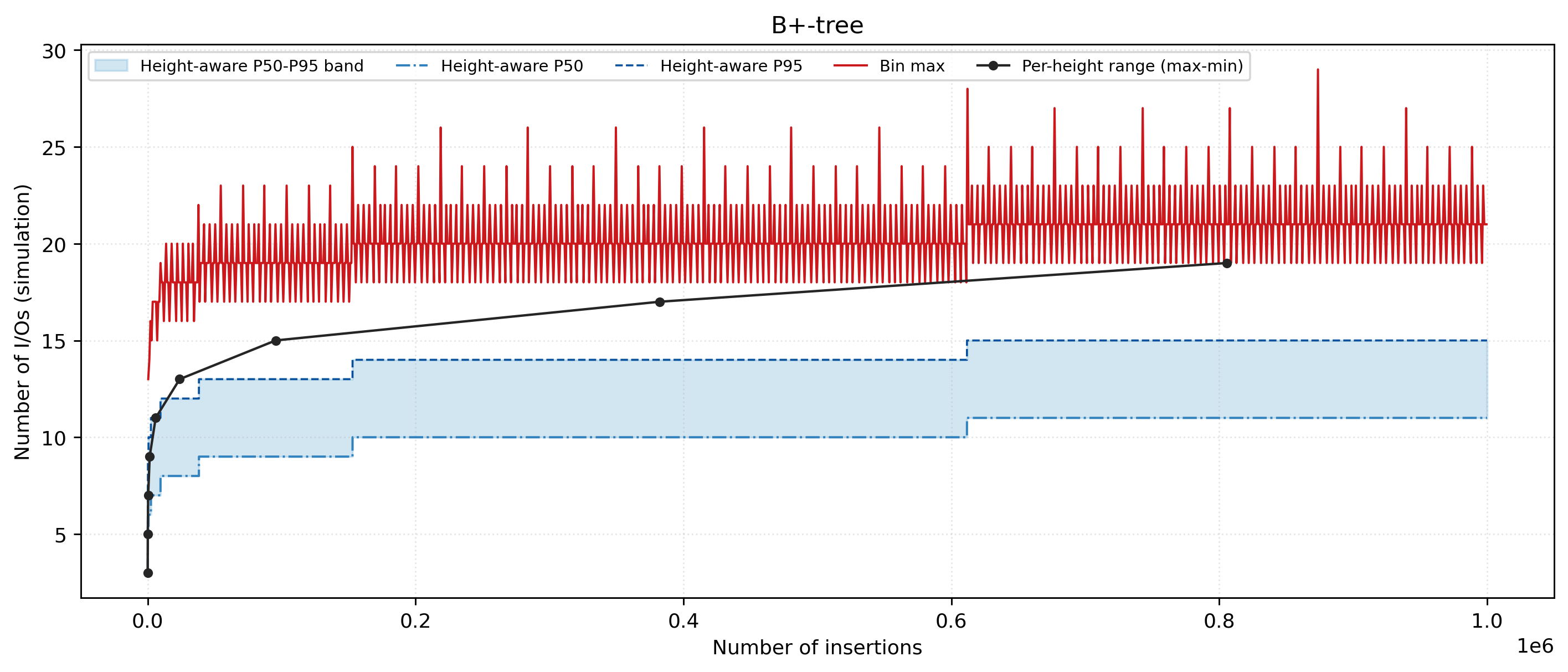}
    \caption{Simulated \bplustree{} I/O fluctuations under sequential inserts.}
    \label{fig:latency-spikes}
\end{figure}

Other data indexes experience fluctuation in performance. For example, 
Log-Structured Merge-Trees (LSM-trees)~\cite{o1996log} exhibit a similar phenomenon. When a level reaches capacity, the system performs compaction that reads multiple files from disk, merge-sorts them (often with significant CPU and memory 
overhead), 
and writes new files back to disk. Because compaction consumes substantial I/O bandwidth and compute resources, it can interfere with foreground operations and manifest as latency spikes or throughput drops~\cite{balmau2019silk,balmau2020silk}. More broadly, performance fluctuations are not limited to data indexes. Other subsystems, e.g., buffer management, transaction management, background daemons, global resource sharing, and maintenance activities can introduce additional, hard-to-predict stalls~\cite{dean2013tailatscale}.

A substantial body of prior work improves amortized DBMS performance, often by batching or deferring maintenance to reduce average I/O and increase throughput~\cite{o1996log,arge1995buffer,bender2015introduction}. However, DBMSs must also control performance variance, including fluctuations in latency and disk I/O. This variability is often hidden by averaged metrics, and techniques that improve mean performance do not necessarily reduce variance; in some cases, tail behavior is unchanged.

We pursue index designs with predictable, ideally deterministic and fluctuation-free, performance. 
Achieving this goal raises several challenges. First, we must formalize what {\it fluctuation} means and how it should be quantified in terms of I/O or latency. Second, predictability should not come at the expense of efficiency. Forcing every insert to pay a worst-case cost would remove variance but would be impractical. Instead, we seek algorithms that minimize fluctuation while preserving average performance comparable to conventional designs. Third, when deployed in concurrent environments, the index should remain predictable while introducing minimal synchronization overhead. Fourth, evaluating predictability requires carefully designed benchmarks. Experiments must run long enough to capture rare events, and the methodology should isolate index behavior by minimizing interference from other system components.

\begin{sloppypar}
    In this paper, we study performance fluctuation in index structures, focusing on \bplustree{} indexes. We formalize fluctuation and show that the conventional \bplustree{} insert algorithms are not fluctuation-free. Motivated by the observation that certain nodes eventually become unavoidable split points, we introduce the concepts of {\em safe} and {\em critical} nodes to characterize when a split can be triggered. Building on these definitions, we propose a new insert algorithm that initiates splits earlier in a controlled manner. The new \sys{} eliminates split propagation and guarantees that each insert performs at most one node split while descending from root to leaf and is not forced to perform additional splits. In other words, the \sys{}'s insert algorithm guarantees that no more than one node in any path from root to leaf will match the splitting criterion. Hence, this bounds the number of node splits during an insert operation to {\bf at most one node split per insert}, and hence bounds the fluctuation in performance to a constant.
\end{sloppypar}

The contributions of this paper are as follows:
\vspace{-\topsep} 
\begin{itemize}
    \item We demonstrate that practical tree indexes exhibit measurable performance fluctuation, and we provide a formal definition of fluctuation for tree-based indexes.
    \item We introduce a new \bplustree{} insert algorithm with a provable guarantee that each insert triggers \emph{at most one} node split.
    \item We demonstrate that our proposed fluctuation-bounded \bplustree{} performs well for concurrency via optimistic lock coupling as less tree nodes are affected during insertion.
    \item We conduct extensive evaluations through both simulation and real implementations, comparing our design against conventional \bplustree{} baselines showing that our proposed \sys{} has 
bounded
fluctuation under all datasets.
\end{itemize}

The rest of this paper proceeds as follows. Section~\ref{sec:related} discusses related work. Section~\ref{sec:problem} presents the problem setting, preliminaries, and explains why prior \bplustree{} algorithms are not fluctuation-free. Section~\ref{sec:algo} defines the core concepts, presents the \sys{}, and proves its correctness in terms of its being fluctuation-free according to the definition. Section~\ref{sec:implementation} describes the implementation of the \sys{}. Section~\ref{sec:exp} evaluates the \sys{} against baselines. Section~\ref{sec:discussion} provides discussion and limitations of our work, and Section~\ref{sec:con} concludes
the paper.

\section{Related Work}\label{sec:related}
B-trees have been introduced in 1970 by Bayer and McCreight~\cite{bayer1970organization}. Since then, they have inspired extensive research and numerous variants. Comer~\cite{comer1979ubiquitous} provides an early survey of B-tree variants. Graefe~\cite{graefe2011modern,graefe2024more} offers a more recent overview of modern B-tree techniques.

There have been many research works on improving the B/\bplustree{} write performance. The Buffer tree~\cite{arge1995buffer} is a B-tree that has a buffer associated with each tree node. The buffer stores data and flushes once full, which amortizes  insert cost. 
A similar structure, the B$\epsilon$-tree~\cite{bender2015introduction} also stores data in a buffer allocated by the non-leaf node. The parameter $epsilon$ can be tuned to distribute the space between the key pivots and the buffer.
SplinterDB~\cite{conway2020splinterdb} adopts the idea of the B$\epsilon$-tree and extends it with an LSM-tree. The FD-tree~\cite{li2010vldb} is a write-optimized B-tree designed for solid-state drives (SSDs). It consists of a head tree on top and several levels of sorted runs beneath.

Previous research has explored reducing tail latency (e.g., the 99th-percentile response time) that often arises from rare but expensive events, e.g., contention, background maintenance, or device-level stalls. \cite{dean2013tailatscale} states that a slow operation harms the service responsiveness that is amplified by scale.
The general approaches include overprovisioning of resources, careful real-time engineering of software and reliability improvement.
The uTree~\cite{chen2020utree} targets persistent memory and reduces tail latency by adding a shadow-list layer at the leaf level, combining the update-friendliness of lists with the search efficiency of trees. For LSM-tree-based key-value stores, SILK~\cite{balmau2019silk} 
shows
that tail spikes persist even after throughput optimizations because client requests still suffer interference from internal flush/compaction work; it introduces an I/O scheduling approach that coordinates internal operations with external load to reduce latency spikes. SILK+~\cite{balmau2020silk} extends this direction to heterogeneous workloads, describing scheduling techniques, e.g., opportunistic bandwidth allocation and prioritization of certain maintenance tasks/paths to further constrain tail behavior. In transaction processing, Plor~\cite{chen2022plor} explicitly targets predictable, low tail latency by combining pessimistic locking with optimistic reading in a hybrid protocol and evaluating its impact on high-percentile latency under contention. Polaris~\cite{ye2023polaris} tackles tail latency by adding priority enforcement to optimistic concurrency control with a small amount of pessimism, reporting substantial tail-latency reductions for high-priority transactions in contended workloads. \cite{primorac2021hedge} studies when adding redundancy (hedging) actually helps reduce tail latency versus when it can be ineffective or harmful due to congestion effects. Tiny-Tail Flash~\cite{yan2017tiny} targets garbage-collection-induced tail latencies inside SSDs, while RAIL~\cite{litz2022rail} aims for predictable low tail read latency on NVMe flash via the internal parallelism of the devices. Perséphone~\cite{demoulin2021idling} reduces tail latency for heavy-tailed microsecond-scale workloads using a kernel-bypass scheduler that reserves resources for short requests rather than being strictly work-conserving. 

Enforcing time constraints of operations
has also been explored in real-time system design. \cite{liu1973scheduling} establishes a mathematical foundation for scheduling tasks in a hard real-time system~\cite{manacher1967production}, where tasks need to meet strict timing constraints. Work on selecting periods/deadlines shows how tuning system parameters can preserve feasibility guarantees~\cite{chantem2007period}. Overrun-aware scheduling highlights how to keep critical guarantees intact even when some operations occasionally take longer than expected~\cite{gardner1999performance}, while budget-based approaches generalize this idea by isolating overruns and bounding their impact~\cite{caccamo2000capacity}. \cite{takado2024real} proposes a real-time operating system (RTOS) for cyber-physical systems with communication time fluctuations~\cite{takado2024real}. LazyTick~\cite{heider2025lazytick} reduces the overhead and variability in how RTOSes release tasks by partitioning the task set and spreading tasks across multiple timers. \cite{wilhelm2008worst} surveys the worst-case execution time (WCET) problem and reviews key analysis techniques and compares existing WCET analysis tools. Real-time database systems (RTDBS) provide database operations and real-time constraints and are surveyed in~\cite{kao1994overview}; \cite{pang1994query} focuses on the query processing component of RTDBS.

Finally, we note that the research area of deterministic DBMS is well-established, but it is conceptually different from our focus on deterministic (i.e., fluctuation-free) performance. In particular, deterministic data-index performance should not be conflated with deterministic database systems, as surveyed in~\cite{abadi2018overview}. A deterministic DBMS guarantees that, given the same input transactions, the system will always reach the same final state. This line of work was originally motivated by database replication~\cite{abadi2018overview,jimenez2000deterministic,kemme2000don,schneider1990implementing,thomson2010case}, and later is shown to offer benefits in scalability, modularity, and concurrency~\cite{abadi2018overview}.

At the concurrency-control level, deterministic execution ensures that replicas produce identical outcomes under the same input workload without relying on expensive distributed commit and replication protocols~\cite{faleiro2017high,faleiro2015rethink,thomson2010case,thomson2012calvin}. Several deterministic DBMS designs, e.g., BOHM~\cite{faleiro2015rethink}, PWV~\cite{faleiro2017high}, and Calvin~\cite{thomson2012calvin}, typically require dependency analysis prior to execution, whereas Aria~\cite{lu2020aria} avoids this requirement.

\section{Problem Setting}\label{sec:problem}
Performance and resource usage often fluctuate in practice. In this section, we define fluctuation, bounded fluctuation, and fluctuation-free behavior, and we explain why existing \bplustree{} algorithms are insufficient to address the fluctuation problem.

\subsection{Preliminaries}\label{subsec:problem-def}
We define {\it fluctuation} as the gap between maximum and minimum performance, where performance can refer to throughput, latency, or number of I/Os (Definition~\ref{def:fluct}).

\begin{definition}\label{def:fluct}
    {\bf (Fluctuation).} Let $M$ and $N$ be the best- and worst-case performances of an operation $O$, respectively. 
    Fluctuation $\Phi$ of Operation $O$ is defined as $|M-N|$.
\end{definition}

The larger the fluctuation, the more unstable a system is. For example, if the fluctuation of throughput is large in an online transactional system (OLTP), it indicates that the system may fail to meet the service level agreement (SLA) guarantees on sustained performance. We further define bounded fluctuation in Definition~\ref{def:bounded-fluct}.

\begin{definition}\label{def:bounded-fluct}
    {\bf (Bounded Fluctuation).} Let $n$ be the dataset size. Average performance $P$ of Operation $O$ can be modeled as $P_{\text{op}=O}=f(n)$. We say that fluctuation of $P$ is bounded iff $\Phi_{\text{op}=O}=g(n)$ where $g(n)\in o(f(n))$.
\end{definition}

Bounded fluctuation captures cases where variability grows asymptotically slower than the baseline cost. For example, if the worst-case performance of a system is $n+\log n$ while the best-case one is $n$, then the fluctuation is $\log n$ and thus is bounded. However, the ideal is to eliminate fluctuation entirely, which motivates the definition of {\it fluctuation-free} in Definition~\ref{def:fluct-free}.

\begin{definition}\label{def:fluct-free}
    If the fluctuation $\Phi_{\text{op}=O}=c$ where $c$ is a constant, then we say it is fluctuation-free.
\end{definition}

In the absence of fluctuations, the system maintains a constant performance difference, making the worst-case performance predictable and thus allowing the system to prepare accordingly.

\noindent
{\bf Caveat.} Our definition of {\it fluctuation} is based on best- and worst-case performance. One might attempt to achieve fluctuation-freedom by maintaining a constant, yet suboptimal, lower bound of performance, thereby eliminating variability. However, this approach renders the system unusable due to persistent inefficiency. Hence, we define fluctuation in terms of both best- and worst-case performance to ensure meaningful evaluation.

\begin{figure}[h]
    \centering
    \includegraphics[width=0.7\linewidth]{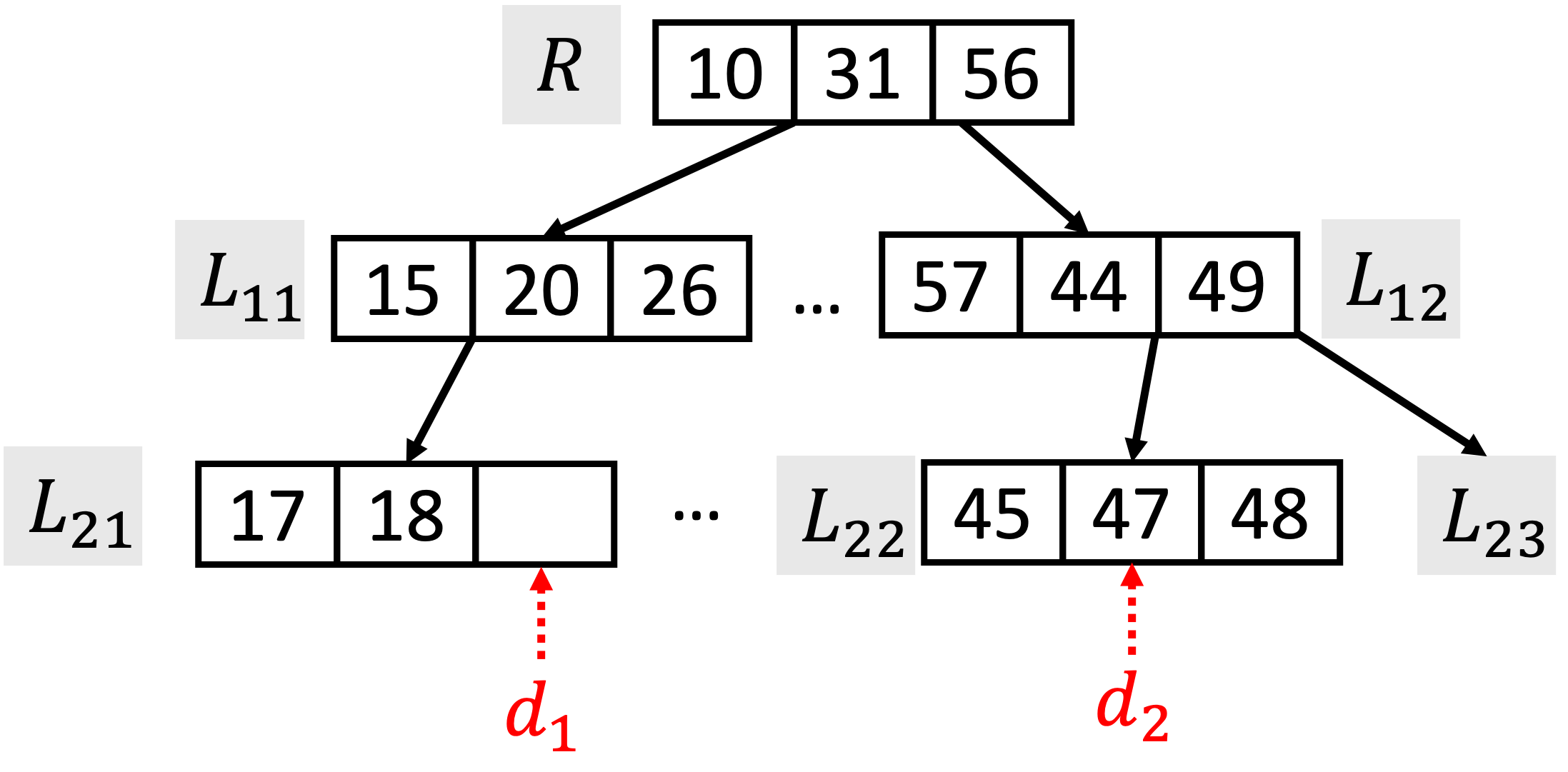}
    \caption{Demonstration of split propagation in normal \bplustree{} and CLRSBtree (top-down preemptive split \bplustree{}, see Section~\ref{subsubsec:clrsbtree}).}
    \label{fig:split-propagation}
\end{figure}

\subsection{Insert Cost of the \bplustree{}}
The \bplustree{}~\cite{bayer1970organization,comer1979ubiquitous} is one of the most widely used indexes in database management systems. The \bplustree{} is a self-balancing tree structure where each tree node is of one page size that stores more than two children nodes. It supports logarithmic insert and lookup operations asymptotically.
 
Let us investigate the insertion process of the \bplustree{} closely. We make the following assumptions: all tree nodes are stored on disk; for each insertion, the corresponding nodes are loaded into memory, and the nodes already brought into memory are retained until the insertion completes, after which they are flushed back to disk. The height of the tree depicted in Figure~\ref{fig:split-propagation} is $H=3$. Inserting $d_1$ into $L_{21}$ incurs a total I/O of $T_{I/O} = H + 1$ including $H$ read I/Os of the nodes along the descending path and one write I/O of the leaf node. This is considered to be the best-case or minimum insertion cost for a data item into the \bplustree{}.

Still the same tree, if $d_2$ is inserted, the cost is increased because the leaf node $L_{22}$ is full, as well as the parent node of $L_{12}$, also the root node $R$. Inserting $d_2$ triggers a {\it split propagation} and the cascading process eventually reaches root node. The total cost of I/O is $T_{I/O} = 3H+1$ including $H$ read I/Os of the nodes along the descending path, and two write I/Os per level, plus one write I/O of the new root node. This is considered the worst-case or maximum insertion cost for a data item into the \bplustree{}.

This worst-case insertion cost ($3H+1$) of the \bplustree{} happens when there is  split propagation reaching up to the root node. The best-case insertion cost ($H+1$) is when no split happens. The fluctuation of the \bplustree{} insertion is $\Phi=2H$, which implies $\Phi \in O(H)$ and $\Phi\notin o(H)$. 
Since $\Phi$ grows with $H$ rather than remaining a constant, the insertion cost is not bounded-fluctuation (and thus not fluctuation-free).

\subsubsection{Another \bplustree{} Splitting Algorithm}\label{subsubsec:clrsbtree}
Instead of splitting from the leaf node and the splits propagating potentially to the root, the \bplustree{} can also split preemptively from root to leaf (as  in~\cite{clrs}).
In this algorithm, nodes are split preemptively in the descending process if they are full. In the following discussion, we 
term 
this tree 
the 
CLRSBtree. Using the same tree as 
in the previous 
example, inserting $d_1$ would be the same process, incurring a total cost of $T_{I/O} = H+1$. Inserting $d_2$ starts by splitting the full root node $R$, then splitting the full $L_{12}$, and finally splitting the full $L_{22}$. The total cost remains unchanged, at $T_{I/O} = 3H+1$. Fluctuation is the same as discussed above, which is $\Phi=2H \notin o(H)$ and is not bounded.

\subsubsection{Adversary Datasets}\label{subsubsec:data}
A natural question arises: is it always possible to construct an adversarial dataset for a normal \bplustree{} or a CLRSBtree? For a normal \bplustree{}, a monotonically increasing or decreasing dataset is an adversary dataset. If the data keeps inserting to $L_{23}$ as in Figure~\ref{fig:split-propagation}, Root Node $R$  eventually gets filled by continuous splits of $L_{12}$. Then, $L_{12}$ is filled while $R$ remains unchanged. Then Leaf Node $L_{23}$ gets filled. This adversary ordering results in  a  path with all nodes filled. 
Then, an additional insert in the same order causes split propagation all the way from root to leaf or vice versa. 

The case for the CLRSBtree is similar, but the nodes are filled in 
opposite order. $L_{23}$ is the first filled. Then, $L_{12}$ can get full without disrupting $L_{23}$, when data gets inserted into $L_{12}$'s other children until $L_{12}$ becomes full. Root node $R$ can become full in the same way. One final insert into $L_{23}$ triggers a worst-case cost I/O in CLRSBtree.

\section{Algorithms and Analysis}\label{sec:algo}
In this section, we present a \bplustree{} insertion algorithm that guarantees fluctuation-free inserts. First, we  introduce the needed definitions in Section~\ref{subsec:algo-def}, then we describe the algorithm in Section~\ref{subsec:algo-insert}, prove its correctness in Section~\ref{subsec:algo-analysis}, and analyze its cost in Section~\ref{subsec:algo-cost}.

\subsection{Definitions}\label{subsec:algo-def}

First, we define {\it safe}, {\it unsafe}, and {\it critical} nodes with respect to the current tree state and an insertion into the node's subtree.
\begin{definition}\label{def:safe}
    ({\bf Safe and Unsafe Nodes}. ) A node $N$ is said to be \textit{safe} if, for any next insert into the subtree rooted at $N$, the resulting state of $N$ remains within its capacity constraints and no split is required at $N$. A node $N$ is said to be \textit{unsafe} if there exists some next insert into $N$'s subtree that would trigger a split at $N$ (possibly as part of a split propagating from below).
\end{definition}

We say a node is {\em critical} when it is about to transition from being safe to being unsafe. Splitting a critical node in a proper way can 
prevent 
split propagation from 
taking place.

\begin{definition}\label{def:critical}
    ({\bf Critical Node}. ) Let $F(N)$ be the free space in Node $N$ after the next insert into $N$'s subtree, assuming no split at $N$. Let $\text{unsafe}(N)=1$ iff node $N$ is unsafe (Definition~\ref{def:safe}), else 0. When $N$ is a leaf node, $\text{critical}(N) = \text{True}$ if $F(N)=0$. When $N$ is a non-leaf node, $\text{critical}(N) = \text{True}$ if \\
    $F(N)= \sum\limits_{i=1}^{|N.chd|}\big(\text{critical}(N.chd[i])+\text{unsafe}(N.chd[i])\big)$,
    where each predicate evaluates to 1 if true and 0 otherwise, i.e., $N$'s free space is just sufficient to accommodate one split per critical or unsafe child of $N$.
\end{definition}

In other words, a node is critical if it is a full leaf and has reached its capacity limit, or if it is an internal node whose free space is just sufficient to accommodate the potential splits of its critical children.
We use {\it safe non-critical} to denote nodes with strictly more free space than a critical node.
Certain nodes are marked critical after an insert. These critical nodes are scheduled for splitting, and with a carefully arranged splitting order, split propagation can be completely avoided. The detailed procedure is described in Section~\ref{subsec:algo-insert}.
As will be demonstrated in the correctness proof, the key observation is that it is guaranteed there will be at most one split from any path from root to leaf at any one time. 
Only the critical node closest to the leaf will be split and this split never propagates upward,
and hence ensuring deterministic performance, or fluctuation-free insert performance. 

\begin{figure}[h]
    \begin{subfigure}{0.49\linewidth}
        \centering
        \includegraphics[width=\linewidth]{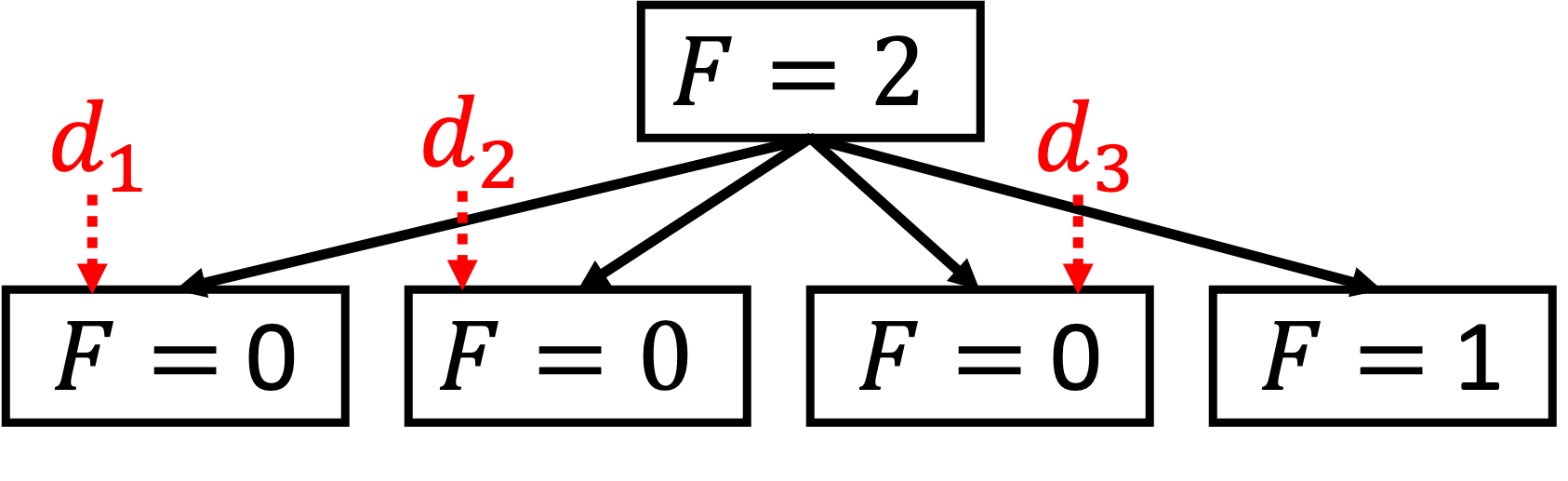}    \caption{}\label{fig:critical_2_level-1}
    \end{subfigure}
    \hfill
    \begin{subfigure}{0.49\linewidth}
        \centering
     \includegraphics[width=\linewidth]{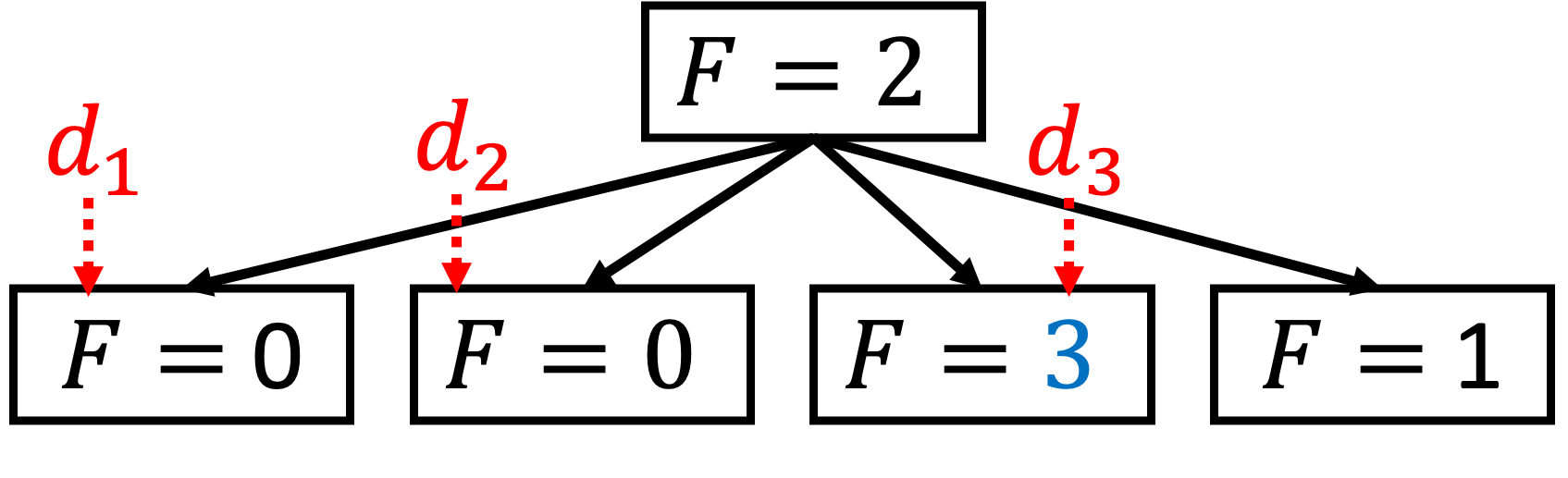}
     \caption{}\label{fig:critical_2_level-2}
    \end{subfigure}
    \vfill
    \begin{subfigure}{0.52\linewidth}
        \centering
        \includegraphics[width=\linewidth]{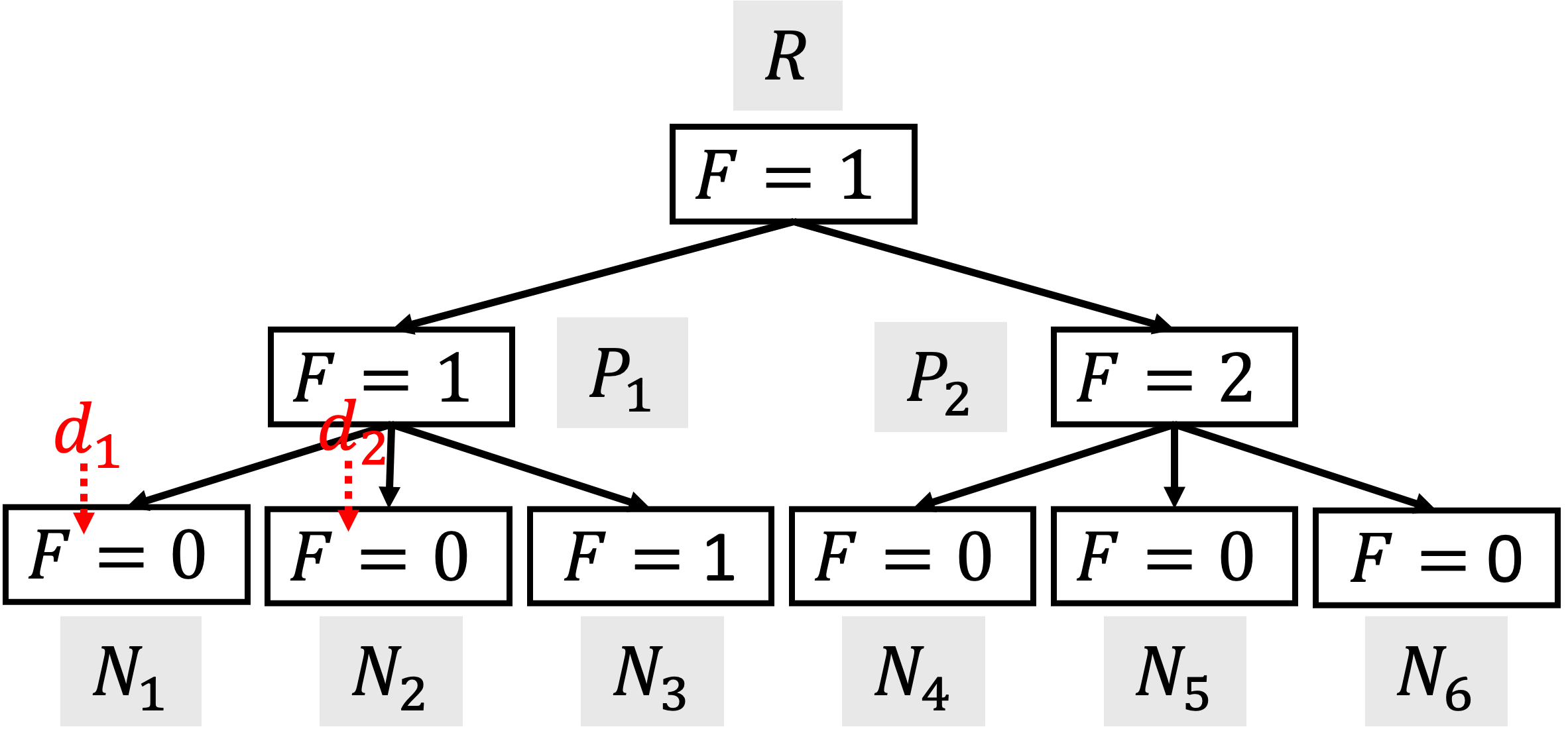}
    \caption{}\label{fig:critical_3_level-1}
    \end{subfigure}
    \hfill
    \begin{subfigure}{0.47\linewidth}
        \centering
        \includegraphics[width=\linewidth]{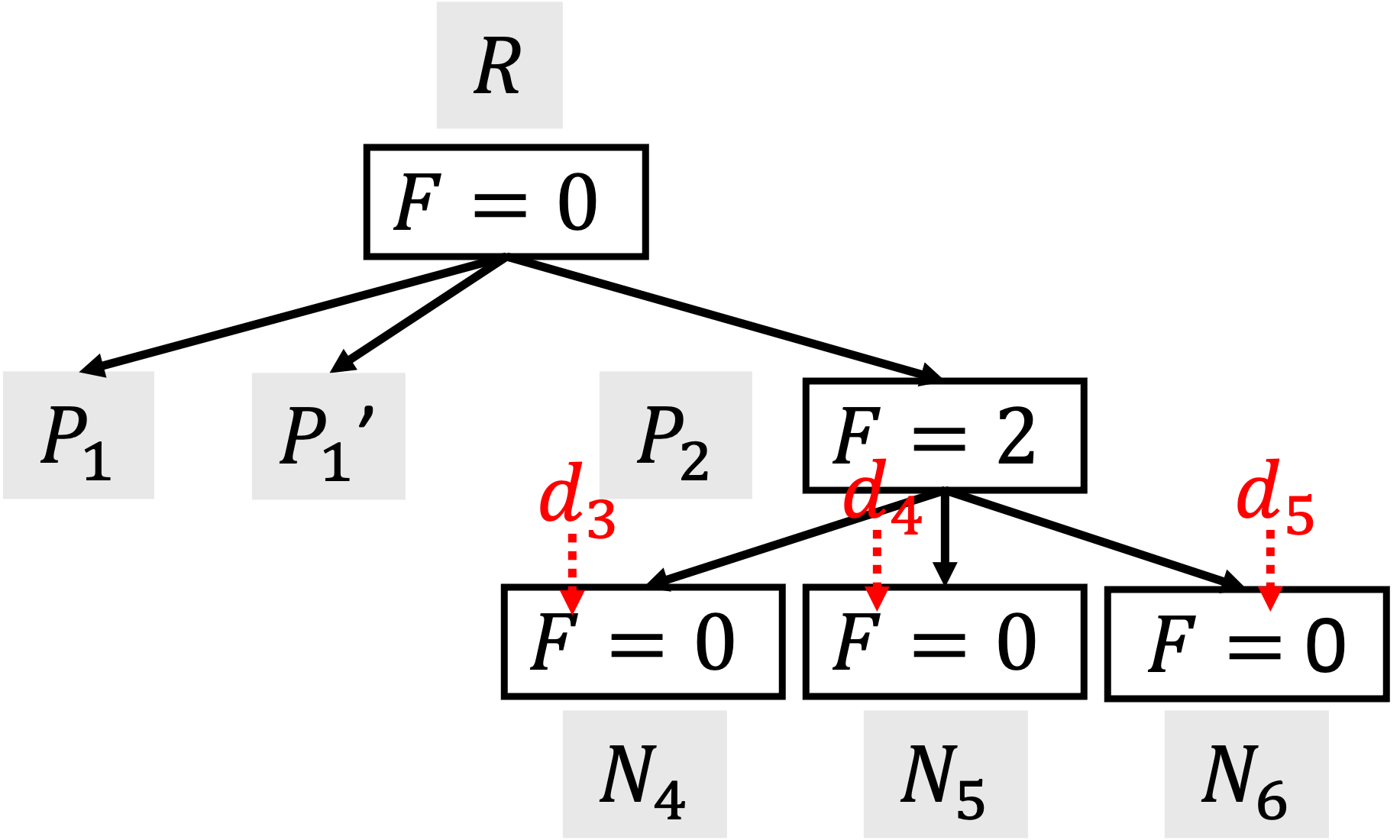}
    \caption{}\label{fig:critical_3_level-2}
    \end{subfigure}
    \caption{Illustrations of critical/safe nodes in    trees of Heights 2 and 3.}
    \label{fig:critical}
\end{figure}

\subsubsection*{Example 1}
Figure~\ref{fig:critical} shows examples of leaf and nonleaf critical nodes. In Figure~\ref{fig:critical_2_level-1}, there are three full leaf nodes and the root node has two free slots. Based on Definitions~\ref{def:safe} and \ref{def:critical}, the root node is unsafe. There will be split propagation if the next inserts are $d_1$, $d_2$ and $d_3$. However, in Figure~\ref{fig:critical_2_level-2},  $F(\text{root})=2$, hence, the root node is critical. With the insertion of  $d_1$, $d_2$ and $d_3$, there cannot be a split propagation and it is safe to preemptively split the root node when $d_3$ or later data items are inserted (detailed algorithm in Section~\ref{subsec:algo-insert}).

\subsubsection*{Example 2}
Figure~\ref{fig:critical_3_level-1} shows a tree of Height 3. $P_1$ is unsafe because $N_1$ and $N_2$ are critical and $F(P_1)=1<2$. $P_2$ is also unsafe. For the root node, since both its children nodes are unsafe, and it has only one empty slot, it is also unsafe. Insertions $d_1$ and $d_2$ can trigger a split propagation from $N_2$ to $P_1$. The tree after split propagation is shown in Figure~\ref{fig:critical_3_level-2}. Continuous insertions of $d_3$, $d_4$ and $d_5$ triggers another split propagation from $N_6$ to $P_2$ and will lead to the root split. This split propagation results in three node splits in a row.

\subsection{Insertion Algorithm}\label{subsec:algo-insert}
The overview insertion process is given in Algorithm \ref{alg:ins_algo}. We name the tree that adopts this algorithm the \sys{}. Insertion is similar to the normal \bplustree{} insertion in that the destination leaf node is found by comparing the inserted key with the keys in the non-leaf nodes. The difference lies in that it is guaranteed that at most one split takes place from root to leaf per insertion in the \sys{}. We will show that it is guaranteed that at most one node matches the criterion for splitting when we traverse from root to leaf in the \sys{} at any one given time during insertion.

\begin{algorithm}[!t]
\caption{Insertion ($k$, $v$)}\label{alg:ins_algo}
\begin{minipage}{\columnwidth}
\KwData{$k$, $v$}
$cur = root$

$last = null$

\While{True}{
    \If{$cur$ is critical}{
        $last = cur$;
    }
    \If{$cur$ is leaf node}{
        break\;
    }
    $next$ = FindNextNode ($cur$, $k$)\;
    
    $cur = next$\;
}
\If{$last$ is not $null$}{
    ProactiveSplit($last$)
}

Insert($cur$, $k$, $v$)

UpdateCritical()
\end{minipage}
\end{algorithm}

Starting from the root node, we find the last 
critical node that is
closest to the leaf
on the path. If there is such critical node, that critical node is split (we name it proactive split) and the new split node is inserted to its parent node before the leaf node insertion. Finally, the critical information is updated (details in Section~\ref{sec:implementation}). It is guaranteed that if we always split the 
critical node
that is closest to the leaf level, being a leaf node or a non-leaf node, the parent node of the critical node cannot be full by the time of the preemptive split. Thus, there cannot be any split propagation.

\subsection{Algorithm Analysis}\label{subsec:algo-analysis}
It is not obvious to see that preemptively splitting the 
critical node
that is closest to the leaf
guarantees only one split per insertion operation. This one split per insertion makes the \sys{} insert deterministically. We prove that at most one split can happen under Algorithm~\ref{alg:ins_algo} by structural induction. We first start by two base cases for a tree with Heights one and two, and then proceed to the trees with larger heights through induction. Intuitively, the \sys{} maintains a local slack invariant: Along any root-to-leaf path, the lowest critical node has a parent with at least one free slot. This invariant is enforced by marking nodes as critical one step earlier than the point where a split would be forced and by splitting only the bottommost critical node. As a result, the split consumes the last available slot in the parent but never cascades upward. The proofs below formalize this intuition by showing that unsafe nodes cannot appear and that critical parents can always absorb child splits without becoming unsafe. A useful way to interpret the invariant is that the \sys{} converts a potentially unbounded split cascade into a bounded, localized operation. The bottommost critical node is the only place where a split is allowed to occur, and all higher nodes are guaranteed to have just enough slack to accept the separator. {\bf This structure of the \sys{} ensures that the worst-case I/O is dominated by one extra write at a single level rather than a chain of writes across levels. In other words, the \sys{} does not eliminate splits, but it deterministically schedules them so their cost is constant per insertion.}

\begin{lemma}
The \sys{} of Height one cannot have split propagation.
\end{lemma}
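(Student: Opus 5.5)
A tree of Height one consists of a single node: the root is also the only leaf. Split propagation means that a split at some node forces a split at its parent, and so on upward. The plan is therefore to show that every insert into a Height-one \sys{} causes at most one split, and that this split cannot force a further split anywhere.

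First, I will fix what a Height-one tree looks like under Algorithm~\ref{alg:ins_algo}. Every root-to-leaf path has length one, namely the root $R$, which is a leaf. By Definition~\ref{def:critical}, $R$ is critical exactly when $F(R)=0$, that is, when it is a full leaf. During the descent, the loop visits only $R$, so $last$ is either $null$ or $R$. This gives two cases.

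\emph{Case 1: $R$ is not critical.} Then $F(R)>0$ after the pending insert, so $R$ has room for $(k,v)$. No ProactiveSplit is invoked, and Insert($R,k,v$) completes without any split. The cost is $H+1$ I/Os.

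\emph{Case 2: $R$ is critical.} Then ProactiveSplit($R$) is called before the insertion. Splitting a root leaf produces two half-full leaves and a new root holding one separator key. The new root is freshly allocated with capacity for many children, so it has free space and is not split. The two new leaves have free slots, so the subsequent Insert places $(k,v)$ in one of them without any further split. Exactly one split occurs, and since $R$ has no parent in the original tree, there is nothing for it to cascade into.

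The only subtle point is the boundary of Definition~\ref{def:critical}. Because $F(N)$ is measured \emph{after} the next insert, a leaf with one free slot already has $F=0$ and is marked critical. I will argue that this convention, together with UpdateCritical keeping the flag current, ensures two things. First, a non-critical leaf truly absorbs the next insert. Second, a critical leaf is split before overflow could occur inside Insert, so Insert never triggers a second, reactive split.

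I would also note that the resulting tree has Height two with a safe root, which hands off cleanly to the Height-two base case. Combining the two cases shows that at most one split occurs per insert and it never reaches a parent node that must itself split. Hence the Height-one \sys{} has no split propagation.
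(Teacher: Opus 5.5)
Your proposal is correct and rests on the same observation as the paper's one-line proof: a Height-one tree has a single node with no parent, so any split of it has nowhere to cascade. Your case analysis on whether $R$ is critical, and your discussion of the boundary in Definition~\ref{def:critical}, are harmless but not needed: even a reactive overflow split of the lone root leaf could not propagate, so the lemma does not depend on the flag being kept current.
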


\begin{proof}
    It is straightforward to show that an \sys{} with Height one, which contains only a single node, cannot exhibit split propagation that involves multiple nodes.
\end{proof}

\begin{lemma}\label{lemma:h2_lemma}
An \sys{} of Height two cannot have split propagation.
\end{lemma}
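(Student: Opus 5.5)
We argue by induction on the number of insertions, maintaining the following invariant $\mathcal{I}$ for a tree of Height two with root $R$ and leaves $L_1,\dots,L_m$: \emph{the root is never unsafe}. Equivalently, writing $c$ for the number of critical leaves (a leaf is critical iff it is full, and for a leaf unsafe and critical coincide once we exclude overfull states), we maintain $F(R) \ge c$, where $F(R)$ is the number of free separator slots in $R$. Split propagation in a tree of Height two means that a leaf split forces a root split. So it suffices to show two things under $\mathcal{I}$: every split performed by Algorithm~\ref{alg:ins_algo} inserts a separator into a root with at least one free slot, and $\mathcal{I}$ is preserved after the insertion and the subsequent UpdateCritical step. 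The base case is the tree obtained when a Height-one tree first splits: the new root holds a single separator and has two half-full leaves. In this tree no leaf is critical and the root has free slots, so $\mathcal{I}$ holds.

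For the inductive step, fix an insertion with key $k$ descending to leaf $L$, and split into cases by which node Algorithm~\ref{alg:ins_algo} records as $last$.

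\textbf{Case (i): $last=L$.} Then $L$ is full, so $c\ge 1$, and $\mathcal{I}$ gives $F(R)\ge 1$. Splitting $L$ therefore consumes one root slot without overflowing $R$. Afterwards $c$ decreases by one, since the two halves of $L$ are not full, and $F(R)$ decreases by one, so $F(R)\ge c$ is preserved. Inserting $k$ into a half of $L$ cannot make it full, assuming a leaf capacity of at least $3$.

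\textbf{Case (ii): $last=R$, with $L$ not critical.} By Definition~\ref{def:critical}, $F(R)=c$. The proactive split of $R$ creates a new root with one separator. The tree then has Height three, which lies outside this lemma, but only one node was split, so there is no propagation. We record this as the transition to the next height.

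\textbf{Case (iii): no critical node on the path.} We insert $k$ into $L$. If $L$ becomes full, $c$ increases by one while $F(R)$ is unchanged. We must then show $F(R)\ge c$ still holds. Because $R$ was not critical before the insert, we have $F(R)\neq c$ in the pre-state; combined with $\mathcal{I}$ this gives $F(R)\ge c+1$, which is exactly what is needed.

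The main obstacle is the accounting in Definition~\ref{def:critical}. It counts $\text{critical}+\text{unsafe}$ per child, and $F$ is defined as the free space \emph{after} the next insert. We have to pin down that for leaves unsafe coincides with critical, so each full leaf is counted once in the intended reading, and that the off-by-one from ``after the next insert'' is consistent between the leaf condition $F(N)=0$ and the root condition. We would settle these conventions first, interpreting $F(N)$ for a leaf so that ``critical'' means ``one more insert fills it''. We would then recheck Case (iii): the step $F(R)\neq c \Rightarrow F(R)\ge c+1$ relies on $\mathcal{I}$ forbidding $F(R)<c$, and it is precisely this that makes marking $R$ critical ``one step early'' prevent $R$ from ever becoming unsafe. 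With $\mathcal{I}$ maintained, the root always has a free slot whenever a leaf splits, so no insert causes more than one split and split propagation is impossible.
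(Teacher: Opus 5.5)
Your argument is correct and is essentially the paper's. The root is flagged critical exactly when $F(R)=c$, so the only event that could drive it to $F(R)<c$ (a non-full leaf filling up) must pass through the critical root and trigger its proactive split, while leaf splits lower $F(R)$ and $c$ together. You run this as a forward invariant over insertions (the paper's inequality~\eqref{equa:geq} at height two) rather than as the paper's contradiction on the first bad state, and this makes explicit the leaf-split case that the paper leaves to its later state-machine discussion.

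One small correction: Algorithm~\ref{alg:ins_algo} splits before inserting, so a full leaf of capacity $3$ splits into halves of sizes $1$ and $2$. Inserting $k$ into the larger half refills it, leaving $F(R)=c-1$ if $R$ was critical, so case (i) needs leaf capacity at least $4$, not $3$.
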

\begin{figure}[h]
    \begin{subfigure}{0.49\linewidth}
        \centering
        \includegraphics[width=0.8\linewidth]{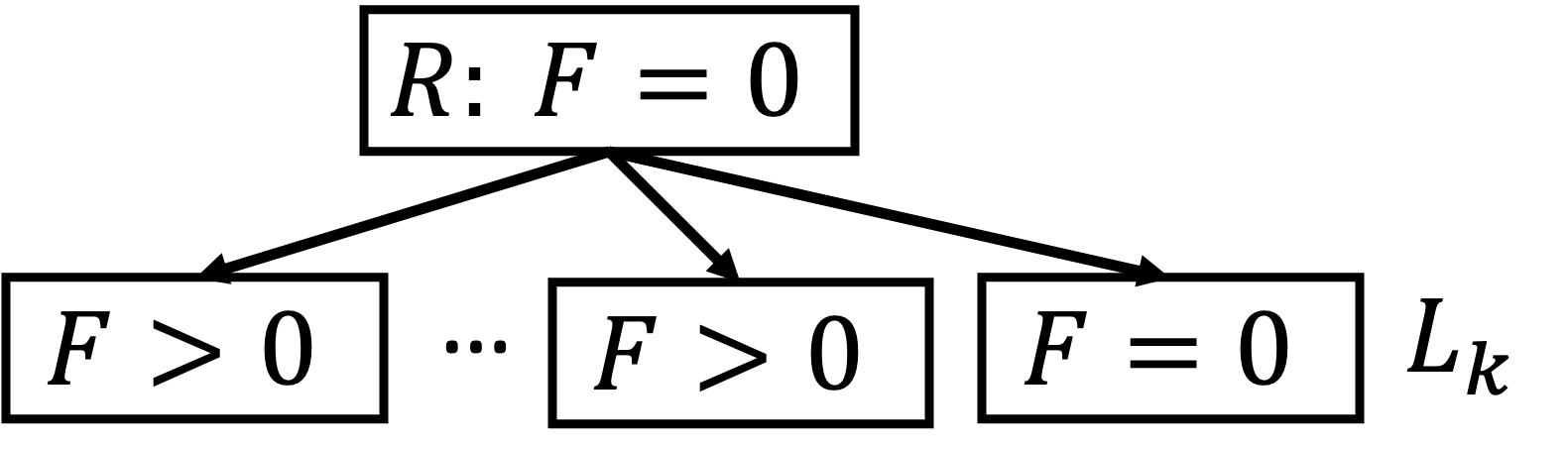}
        \caption{}\label{fig:h2_proof_1}
    \end{subfigure}
    \hfill
    \begin{subfigure}{0.49\linewidth}
        \centering
        \includegraphics[width=0.8\linewidth]{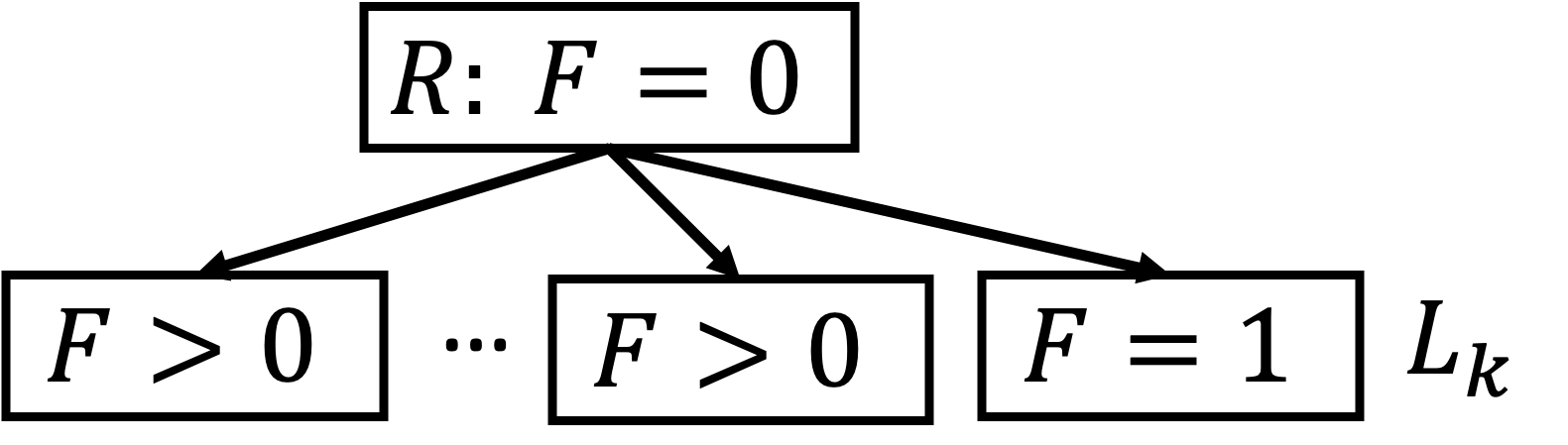}
        \caption{}\label{fig:h2_proof_2}
    \end{subfigure}
    \caption{A two-level \sys{} cannot have split propagation.}
    \label{fig:2-level-free-space}
\end{figure}

\begin{proof}
We prove by contradiction.
Suppose
that split propagation 
is possible
in an \sys{} of Height two. 
Split propagation 
requires
both the root node and one of its leaf nodes, denoted $L_k$, are full, i.e., $F(R) = 0$ and $F(L_k) = 0$, as illustrated in Figure~\ref{fig:h2_proof_1}.

\begin{sloppypar}
Root Node $R$ in Figure~\ref{fig:h2_proof_1} is unsafe because $F(R)=0<\sum\text{critical-child} + \sum\text{unsafe-child}$. For the root node $R$ to become unsafe, it must first be marked as critical before transitioning to the unsafe state, while one of its child nodes becomes critical. Without loss of generality, assume that $L_k$ is the child node that becomes critical (i.e., full). Figure~\ref{fig:h2_proof_2} shows the situation where $F(R) = 0$, $R$ is critical, and the next insertion targets $L_k$, with $F(L_k) = 1$.
\end{sloppypar}

If the insertion proceeds into $L_k$, the tree may evolve into the state shown in Figure~\ref{fig:h2_proof_1}. However, this leads to a contradiction. According to Algorithm~\ref{alg:ins_algo}, the bottommost critical node must be split preemptively. Since R is already critical, it should be split before the insertion into $L_k$. Therefore, the state shown in Figure~\ref{fig:h2_proof_1} cannot occur, and consequently, split propagation cannot happen in an \sys{} of Height two.
\end{proof}

\begin{figure}[h]
    \begin{subfigure}{0.4\linewidth}
        \centering
        \includegraphics[width=\linewidth]{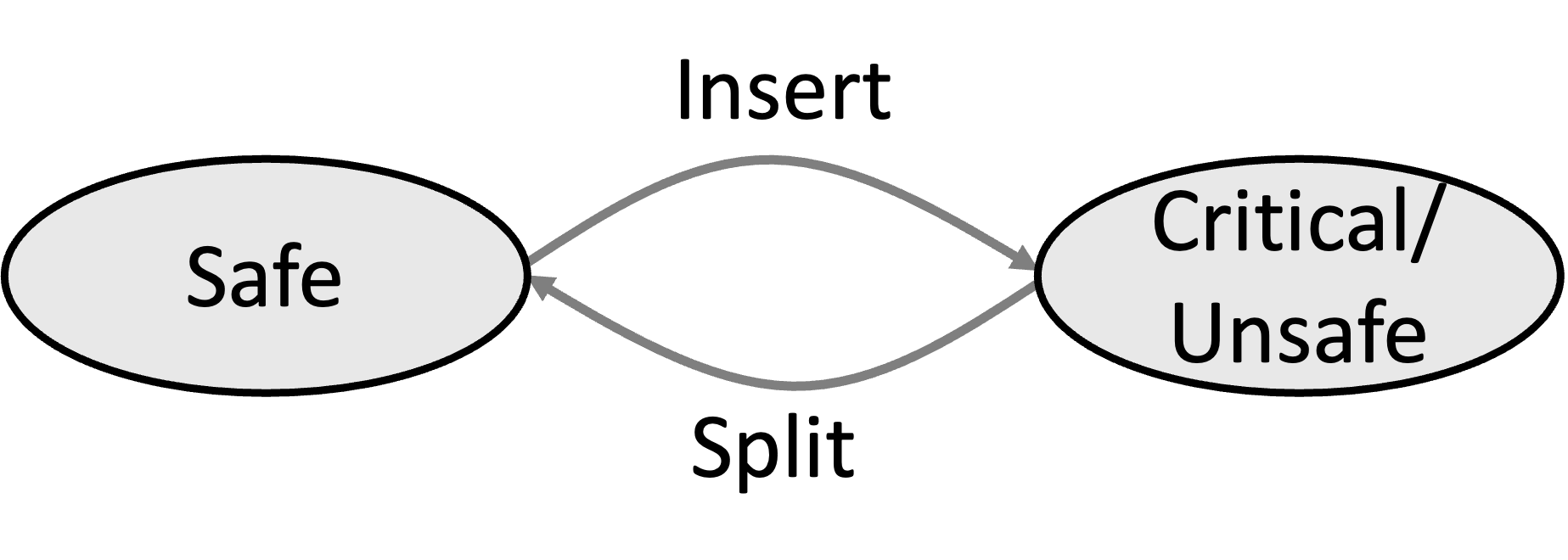}
        \caption{}\label{fig:state_leaf}
    \end{subfigure}
    \hfill
    \begin{subfigure}{0.59\linewidth}
        \centering
        \includegraphics[width=\linewidth]{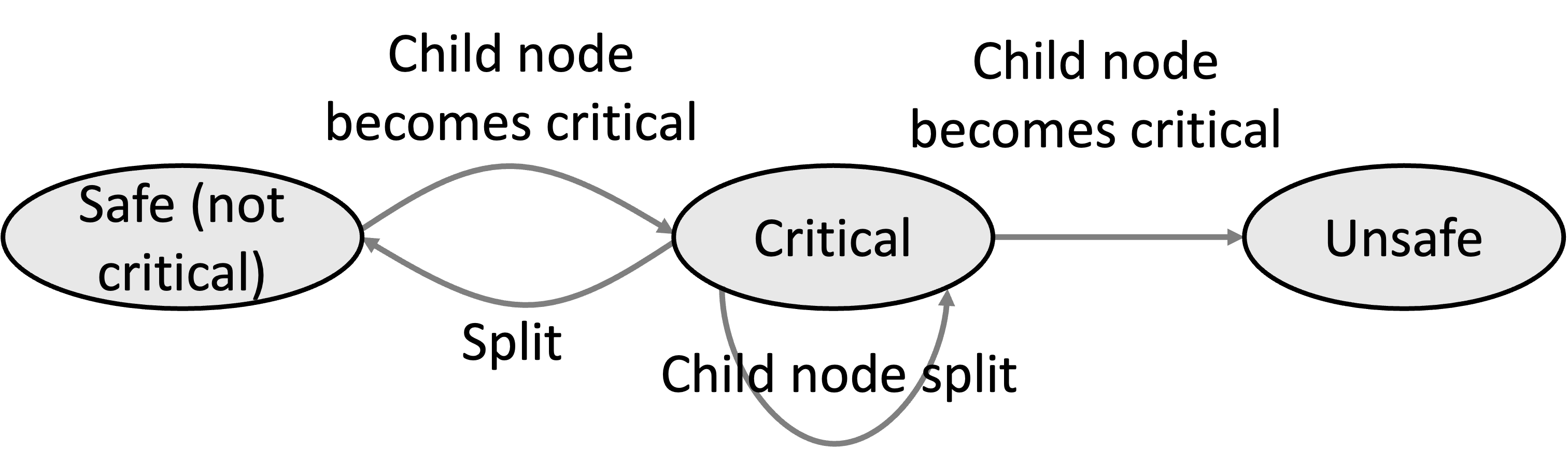}
        \caption{}\label{fig:state_nonleaf}
    \end{subfigure}
    \caption{State machines of leaf and nonleaf nodes in the \sys{}.}
    \label{fig:state}
\end{figure}

In the above proof, we rely on the fact of the nodes' state changes. 
We now extend the analysis to trees of arbitrary height.
In Figure~\ref{fig:state}, we show the diagram of the state changes of leaf and non-leaf nodes. 
By Definition~\ref{def:safe}, a leaf node becomes unsafe once it is full, i.e., it has reached its capacity limit; by Definition~\ref{def:critical}, this node is also critical.
Insertions reduce the free space of a safe leaf node, while node split releases half of the occupied slots (Figure~\ref{fig:state_leaf}). 
For a non-leaf node $N$, 
let $s(N)$ denote the number of $N$'s children that are critical or unsafe, i.e., $s(N)=\sum\text{critical-child} + \sum\text{unsafe-child}$.
According to
Definition~\ref{def:critical},
$N$ is critical when its free space $F(N)=s(N)$. Note that a critical node is still safe.

Now we analyze how $F(N)$ and $s(N)$ evolve. The state of $N$ can change due to three events: (i) $N$ splits, (ii) a child of $N$'s splits, or (iii) a child of $N$'s becomes critical. If $N$ splits, it always terminates in a safe state; if a child node splits, $F(N)$ is decremented by one to accommodate the new node, and $s(N)$ is also decremented because splits occur only at critical or unsafe children, consequently, the state of $N$ does not change; if a child node becomes critical, $s(N)$ is incremented while $F(N)$ remains the same, therefore a safe non-critical node becomes critical, and a critical node becomes unsafe.

The state transition of non-leaf node is summarized in Figure~\ref{fig:state_nonleaf}. A non-leaf node remains in the safe non-critical state as long as it has sufficient free space to absorb child node splits, i.e., $F(N)>s(N)$. Once $s(N)$ increases so that $F(N)=s(N)$ 
, the non-leaf node becomes critical, and further increases in $s(N)$ cause it to become unsafe.
The critical non-leaf node stays critical when the child node splits since both $F(N)$ and $s(N)$ decrement simultaneously. Finally, the non-leaf node becomes unsafe when one of its child nodes has a state transition that increases $s(N)$.

With the above analysis, we conclude that an insertion induces a state change in a leaf node, and that a child’s state transition can in turn trigger a transition in its parent. Hence, state changes originate at the leaf level and may propagate upward through the tree.
Thus, we prove the following corollary.
\begin{corollary}\label{coro}
    In a \sys{} of arbitrary height, a non-leaf node can transition from being safe non-critical to critical only as a result of a leaf node becoming critical. Additionally, during this process, no node along the insertion path may perform a proactive split.
\end{corollary}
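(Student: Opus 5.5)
The plan has two parts, matching the two sentences of Corollary~\ref{coro}. Both rest on the event analysis given just before the statement. The state of a non-leaf node $N$ can change only through (i) $N$ splitting, (ii) a child of $N$ splitting, or (iii) a child of $N$ becoming critical. For a leaf, the only event is an insertion that lowers its free space, or a split that restores it (Figure~\ref{fig:state_leaf}).

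For the first sentence I will argue by induction on the height of $N$ above the leaf level. Suppose $N$ moves from $F(N)>s(N)$ to $F(N)=s(N)$. Event (i) always leaves $N$ safe non-critical, so it cannot be the cause. Event (ii) decrements $F(N)$ and $s(N)$ together, so the gap $F(N)-s(N)$ is unchanged. Hence only event (iii) can close the gap: some child $C$ of $N$ has just become critical. If $C$ is a leaf, we are done. Otherwise, by the induction hypothesis, $C$'s own transition from safe non-critical to critical was caused by a leaf becoming critical. I also need that $C$ was not previously unsafe, so that "becoming critical" for $C$ really is a transition out of the safe non-critical state. For this I would use the invariant from the proof of Lemma~\ref{lemma:h2_lemma}: unsafe nodes cannot arise, because a critical node on the path is split before any further increment of its $s$. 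Unwinding the induction yields a chain $L=N_0,N_1,\dots,N_j=N$ of consecutive nodes on the insertion path. Each $N_i$ passes from safe non-critical to critical in the same insertion, triggered by $N_{i-1}$, and $N_0$ is a leaf that goes from $F=1$ to $F=0$.

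For the second sentence I would show that before the insertion, no node on the path is critical, so the test in Algorithm~\ref{alg:ins_algo} leaves $last=null$ and no proactive split happens. For the chain nodes $N_0,\dots,N_j$ this is immediate: each was safe non-critical just before the insertion, and a leaf with $F=1$ is not critical. For a node $A$ strictly above $N$, I will argue by contradiction. Suppose $A$ were critical and hence selected by Algorithm~\ref{alg:ins_algo}. Then $A$ is the bottommost critical node on the path, so its split happens before the leaf insertion. By event (i) this split leaves both halves of $A$ safe non-critical, and by event (ii) it leaves $A$'s parent's state unchanged. Because the split precedes the leaf insertion, the subsequent transition of the leaf and the chain $N_0,\dots,N_j$ would then occur in an insertion that already contained a structural change. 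I would argue that this contradicts the premise that the transition is caused solely by the leaf becoming critical; this is the step I need to make rigorous.

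I expect that last step to be the main obstacle. A critical ancestor $A$ may be critical because of \emph{other} children, since $N$ itself is non-critical and so is not counted in $s(\mathrm{parent}(N))$. So nothing in the definitions immediately forbids $A$ from being critical while the chain below it is all non-critical. My first attempt will use the invariant from Lemma~\ref{lemma:h2_lemma} along the whole path. If the chain reaches $N_j$ and the parent of $N_j$ is critical, then $N_j$ becoming critical raises that parent's $s$ beyond its $F$, making it unsafe, which the invariant forbids. So the chain must stop below every critical ancestor. I would then try to show that the algorithm's UpdateCritical step, combined with splitting the bottommost critical node, keeps the nodes above any safe non-critical node on the path non-critical during an insertion that only increments $s$ values. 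If that fails, the fallback is to read "during this process" as restricted to the insertion in which the chain is created, and to show that a split of a critical ancestor would itself be recorded as event (i) rather than as part of the leaf-driven transition.
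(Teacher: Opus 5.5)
Your induction on height for the first sentence is a tidier version of the paper's ``the transition must propagate along the path'' argument. Your observation that the chain nodes $N_0,\dots,N_j$ were all safe non-critical before the insertion, so Algorithm~\ref{alg:ins_algo} never selects them, is the paper's entire argument for the second sentence.

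\textbf{The gap.} The problem is the extra step you single out as the main obstacle: showing that no ancestor $A$ strictly above $N$ is critical, so that $\mathit{last}=\mathit{null}$. That claim is false, so no contradiction argument can establish it. Take a height-three tree whose root $R$ has exactly one critical child $A_1$ and $F(R)=1=s(R)$. Insert into a leaf below a child $A_2$ with $F(A_2)=s(A_2)+1$, chosen so that this leaf becomes critical. Then $A_2$ passes from safe non-critical to critical. In that same insertion $R$, the only critical node on the path, is proactively split.

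This is exactly the situation in the proof of Lemma~\ref{lemma:no_unsafe}. That proof applies the corollary to the chain below $R$ and then concludes that $R$ is split during that insertion. So the second sentence has to be read as a statement about the chain only, i.e.\ the part of the path from $N$ down to the leaf (the paper's $N_1,\dots,N_k$). In that form you have already proved it.

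Your first attempt also runs the wrong way. A critical parent of $N_j$ does not stop the chain; it gets split, and that split is what keeps it from becoming unsafe. Drop the ancestor argument. The only thing worth adding is that a split of a critical ancestor changes $F$ and $s$ only at that ancestor's two halves and at its parent, so it does not disturb the transitions along the chain.

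\textbf{A smaller point.} To rule out that the child $C$ was previously unsafe, you cite ``unsafe nodes cannot arise''. At arbitrary height that is Lemma~\ref{lemma:no_unsafe}, whose proof invokes this very corollary, so the citation is circular. Lemma~\ref{lemma:h2_lemma} covers only height two, and the same circularity affects your first attempt for ancestors.

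The citation is also unnecessary. Because $s(N)$ counts the children that are critical or unsafe, an increment of $s(N)$ already means $C$ was safe non-critical before the insertion. Only one child of $C$ lies on the insertion path, so $s(C)$ rises by at most one, and event (ii) leaves $F(C)-s(C)$ unchanged. Hence that gap drops by at most one, and $C$ lands exactly at $F(C)=s(C)$, which is what your induction needs.
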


\begin{proof}
Suppose that there is a path from $N_1$ (closest to root node) to $N_k$ (leaf node). To have $N_1$'s state changed from being safe to being critical, one of $N_1$'s child nodes must transition to being critical. This transition must propagate along the path, with each node becoming critical due to a change in the state of its child, down to $N_k$, which is the node that initially transitions to critical. In this path, since all the nodes just encounter a state change from safe to critical, there cannot be any proactive splits.
\end{proof}

Next, we show that if the \sys{}
of arbitrary height
follows the Algorithm \ref{alg:ins_algo}, there cannot be unsafe nodes.

\begin{lemma}\label{lemma:no_unsafe}
    In the \sys{}
    of arbitrary height
    , non-leaf nodes are either safe non-critical or critical, i.e., there does not exist unsafe non-leaf nodes. The free space of the non-leaf nodes always satisfies the following:
    \begin{equation}\label{equa:geq}
        F(N) \geq \sum\text{critical-child} + \sum\text{unsafe-child}
    \end{equation}
\end{lemma}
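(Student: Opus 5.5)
I would prove the invariant $F(N)\geq s(N)$ for every non-leaf node $N$, where $s(N)=\sum\text{critical-child}+\sum\text{unsafe-child}$. The argument is an induction on the number of insertions performed on the \sys{}, showing that the invariant is preserved by every step of Algorithm~\ref{alg:ins_algo}. Since a non-leaf $N$ is unsafe exactly when $F(N)<s(N)$, the invariant rules out unsafe non-leaf nodes. For the base case, take an empty tree (or one consisting of a single leaf): it has no non-leaf nodes, so the claim holds vacuously, and Lemma~\ref{lemma:h2_lemma} gives the height-two case as a sanity check. For the inductive step, assume every non-leaf node satisfies Equation~\ref{equa:geq} before an insertion. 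I then consider the three events listed in the state-transition analysis preceding Corollary~\ref{coro}, namely (i) a node splits, (ii) a child of $N$ splits, and (iii) a child of $N$ becomes critical, and check that none of them can push $F(N)$ below $s(N)$.

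Events (i) and (ii) are the easy part. If $N$ itself is split proactively, the two halves start with about half of the slots free. I would argue that each half inherits only the critical or unsafe children assigned to it, so its slack is at least as large as before and the inequality survives. If a child $C$ of $N$ splits, then $C$ must have been critical, because by the inductive hypothesis there are no unsafe non-leaves, and the only unsafe leaves are full leaves, which are critical. So $F(N)$ drops by one. Meanwhile $C$ leaves the critical set, and both halves of $C$ come out safe non-critical: a split leaf is half empty, and a split non-leaf has slack by (i). Hence $s(N)$ also drops by one and the difference $F(N)-s(N)$ is unchanged. The one point needing care is that $F(N)\geq 1$ before this split. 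That follows from $F(N)\geq s(N)\geq 1$, since $C$ contributes to $s(N)$, so the separator fits and nothing propagates.

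The main obstacle is event (iii), which is the only event that increases $s(N)$. I plan to show it can only hit a node $N$ that is safe non-critical, never one that is already critical. By Corollary~\ref{coro}, a chain of new critical states starts at a leaf that becomes critical through the insertion, and it climbs upward along the insertion path only through nodes that were previously safe non-critical. Now suppose some node $N$ on the path was already critical and gains a new critical child. Then $N$ lies on the insertion path and is critical when the descent reaches it. Let $N'$ be the bottommost critical node on the path, which is at or below $N$. Algorithm~\ref{alg:ins_algo} proactively splits $N'$ before the leaf insertion. I must check that this split breaks the chain below $N$. If $N'=N$, the split resets $N$ to safe. If $N'$ is strictly below $N$, then by the event (ii) argument the split of $N'$ turns a critical child of $N'$'s parent into safe halves, and this cancels the newly created critical child. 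Hence $F(N)-s(N)$ never becomes negative.

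I expect the delicate part to be the timing. UpdateCritical runs after the proactive split and the insertion, so I need to make sure the newly critical leaf and the split node are accounted for in the same step. The way I would handle this is to compute the net change in $s$ for each ancestor over the whole insertion rather than event by event.
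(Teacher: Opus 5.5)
Your proof is correct in substance and uses the same mechanism as the paper's. A critical node can only be pushed below its slack by a child that newly becomes critical. By Corollary~\ref{coro}, that requires a chain of previously safe non-critical nodes rising from the leaf with no proactive split along the way. So the critical node is the bottommost critical node on the path, and Algorithm~\ref{alg:ins_algo} splits it.

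The difference is organizational. The paper does structural induction on height and applies this reasoning only at the root: it assumes $F(R)<k$, notes that $R$ must first sit at $F(R)=k$, and derives a contradiction. The subtrees are left to the induction hypothesis, and split events are covered only by the informal state-machine discussion before the corollary. You instead induct on the number of insertions and check that every event preserves $F(N)\ge s(N)$ at every node. Your framing arguably fits the statement better: the invariant is about an evolving tree whose height changes (a root split is just your event (i)), and the paper's height induction silently relies on that temporal reasoning anyway. The paper's version is shorter because it only inspects one node.

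Two points need repair.

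\emph{The sub-case ``$N'$ strictly below $N$''.} This case is vacuous, and the mechanism you give for it is wrong. The split of $N'$ changes $F$ and $s$ of $N'$'s parent; it cannot cancel a new critical child of $N$. The correct remark is one you already made: a chain reaching a child of $N$ passes only through previously safe non-critical nodes, so there is no critical node strictly below $N$ on the path, and $N'=N$. This is the paper's ``there are no other critical nodes''.

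\emph{The absorption step when $N'=N$.} This is the more important gap. The newly critical child lands in one half of $N$. Your event (i) bound, slack ``at least as large as before'', gives only slack $\ge 0$ for a critical node, and that is not enough:
\begin{itemize}
\item a half with slack $0$ would become unsafe;
\item a half with slack $1$ would become critical, so $N$'s parent loses a free slot while its $s$ is unchanged (it loses $N$ but gains a critical half), and a critical parent would turn unsafe.
\end{itemize}
You need a quantitative bound. Measure free space in child slots with capacity $c$. Criticality of $N$ with $m$ children gives $c-m=s(N)$. A half with $m_i$ children and $s_i\le s(N)$ critical ones then has slack
\[
c-m_i-s_i\ \ge\ m-m_i,
\]
which is the number of children in the other half. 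This is at least $2$ whenever each half keeps at least two children. The half can then absorb the new critical child and still be safe non-critical, so the chain stops there. This is exactly the per-insertion net accounting you anticipated in your last paragraph.

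The paper's ``$R$ would have already been split'' skips this same step, so your repair sharpens the paper's argument rather than departing from it.
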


\begin{proof}
We proceed by structural induction on the Height $H$ of the tree.

\noindent
{\bf Base Case: } When $H=1$, there is only one leaf node. When $H=2$, we already prove this in the proof of Lemma~\ref{lemma:h2_lemma}.

\begin{figure}[h]
    \centering
    \includegraphics[width=0.7\linewidth]{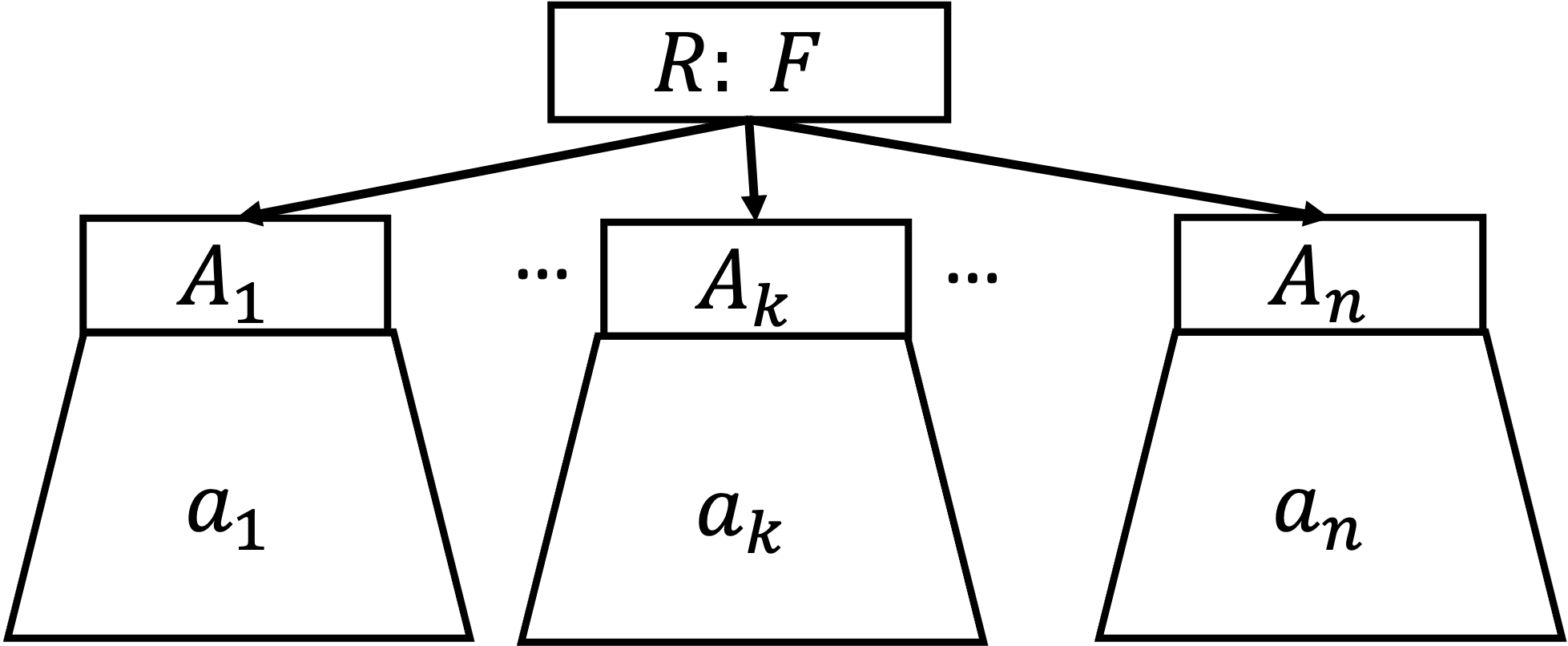}
    \caption{An \sys{} of Height $h+1$ with $n$ subtrees $a_1, \dots, a_n$ of Height $h$ that are rooted at nodes $A_1, \dots, A_n$}\label{fig:no_unsafe}
\end{figure}

\noindent
{\bf Inductive Step:} Suppose that when $H=h$, for all non-leaf nodes, $F\geq \sum\text{critical-child} + \sum\text{unsafe-child}$ holds. 
Now, we need to prove that the inequality still holds when $H=h+1$. Suppose that Root Node $R$ has $n$ children nodes $A_1, \dots, A_k, \dots, A_n$. $a_1, \dots, a_k, \dots, a_n$ are $R$'s subtrees of Height $h$ and the roots of the subtrees are $A_1, \dots, A_k, \dots, A_n$. Without loss of generality, assume  that $A_1, \dots, A_k$ are critical among the child nodes. 

We proceed 
to prove 
by contradiction. Suppose that it is possible for $R$ to be $F(R)<k$. Based on the state machine in Figure \ref{fig:state}, the root node becomes critical prior to its being unsafe and $F(R) = k$. Now, we need to make one of $A_{k+1}, \dots, A_n$ transition to being critical. Since $A_{k+1}, \dots, A_n$ are safe non-critical, therefore $F(A_{k+i})>\sum \text{critical-child} + \sum\text{unsafe-child}$ holds. Changing the state of $A_{k+i}$ requires the change of state of the inserted leaf node in Subtree $a_{k+i}$, and there are no proactive splits along the way (Corollary~\ref{coro}). Now, this is a  contradiction because $R$ is already critical, and $R$ would have already been split during this insertion process because there are no other critical nodes. Thus, $F\geq \sum\text{critical-child}+\sum\text{unsafe-child}$ holds for all nodes.
\end{proof}

\begin{theorem}
In the \sys{}, it is guaranteed that split propagation cannot take place.
\end{theorem}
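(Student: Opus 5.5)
The plan is to derive the theorem directly from Lemma~\ref{lemma:no_unsafe} together with the mechanics of Algorithm~\ref{alg:ins_algo}. Split propagation means that splitting a node $C$ inserts a separator into its parent $P$, and $P$ then overflows and must split as well. So it suffices to show that during any insertion, the single node split by ProactiveSplit has a parent with at least one free slot. It also suffices to show that the leaf insertion itself never overflows a leaf.

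First, I will handle the leaf. If the target leaf $L$ is full, then $F(L)=0$, so $L$ is critical by Definition~\ref{def:critical}. Since $L$ is the bottommost node on the path, Algorithm~\ref{alg:ins_algo} selects $last=L$ and splits it before Insert is called. Hence the leaf insertion always finds room, and at most one split (the proactive one) occurs per insertion.

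Second, consider the proactively split node $C=last$, the bottommost critical node on the path, with parent $P$ (if $C$ is the root, a new root is created and nothing propagates). By Lemma~\ref{lemma:no_unsafe}, $P$ is not unsafe, so
\[
F(P)\ \ge\ \sum\text{critical-child}(P)+\sum\text{unsafe-child}(P).
\]
Because $C$ is a critical child of $P$, the right-hand side is at least $1$, so $F(P)\ge 1$. Therefore $P$ absorbs the separator from $C$'s split without overflowing, and $P$ need not split. After the split, $F(P)$ and $s(P)$ both decrease by one, by the state-transition analysis preceding Corollary~\ref{coro}, so $P$ keeps its state and the invariant of Lemma~\ref{lemma:no_unsafe} is preserved for later insertions.

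The main obstacle is the subtlety of $F(\cdot)$ in Definition~\ref{def:critical}: it is measured \emph{after} the next insert, so one must confirm that ``$F(P)\ge 1$'' translates into an actual physical free slot when the separator arrives. I would resolve this by noting that only one separator enters $P$ during the insertion (nodes below $C$ are not split, and $C$ is split once). The worst case $F(P)=1$ exactly matches this single consumption, which is the ``one step earlier'' slack built into the definition. The argument also relies on the two halves of $C$ not being critical; otherwise they would immediately demand further slack from $P$. I would justify this by the observation from the state machine that a split node ``always terminates in a safe state''.
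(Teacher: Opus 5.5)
Your proposal is correct and follows essentially the paper's argument. Both use Lemma~\ref{lemma:no_unsafe} to get $F(P)\ge s(P)$ for the parent $P$ of the bottommost critical node; since the split child is itself critical, $F(P)\ge 1$, so the separator is absorbed without a second split (the paper states this as two cases, a safe non-critical parent versus a critical parent whose state the child's split leaves unchanged), and your explicit handling of the full-leaf case and of the single separator entering $P$ only spells out details the paper leaves implicit.
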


\begin{proof}
In Lemma~\ref{lemma:no_unsafe}, we prove that all non-leaf nodes are safe, either non-critical or critical.

Safe non-critical non-leaf nodes can accommodate new insertions and do not require splitting. On the other hand, critical nodes must be split, and the newly created node is inserted into the parent. Now, we show that the parent node cannot be split in this process.

\noindent
Case 1. If the parent node (non-leaf) is safe non-critical, it has sufficient space to accommodate the newly split child node.

\noindent
Case 2. If the parent node is critical, accommodating the split of a critical child does not change its state. Even if all critical child nodes split consecutively, the critical parent still retains enough free space to hold them.
\end{proof}

\subsection{Cost Analysis}\label{subsec:algo-cost}
Each insertion performs a single root-to-leaf traversal, requiring $H$ node reads for a Height-$H$ tree. The write cost is constant: the leaf is updated, and at most one split creates a sibling and inserts one separator into its parent. Thus, the worst-case I/O differs from the best case by only $O(1)$ writes, yielding fluctuation-free behavior under our definition while preserving the same asymptotic search cost as a standard \bplustree{}.

\section{Implementation}\label{sec:implementation}

In the previous section, we prove that the insertion algorithm for the \sys{} is correct. In this section, we introduce the implementation details of the \sys{} as well as how to maintain the information of critical nodes efficiently. We implement the \sys{} by extending a conventional \bplustree{} page layout with two lightweight metadata fields: a critical flag and a child-state bitmap in the internal nodes. The critical flag is a single bit indicating that the node is critical or not, and hence can be a candidate for a split during descent. 
The bitmap encodes which children are critical/unsafe, allowing the parent to examine whether it will become critical after a child state change without scanning all entries. Both fields are stored in the page header and are updated in-place during insertions. 
In practice, the bitmap is small (bounded by fanout) and fits within the cache line with the page header.

\subsection{State Maintenance}

\begin{algorithm}[h]
\caption{Insertion ($k$, $v$)}\label{alg:insert_imple}
\begin{minipage}{\columnwidth}
\KwData{$k$, $v$}
 $cur, parent = root, null$\;
  
 $lastCritical$ = $null$\;
 
 \Comment{Node that has just become critical and its parent}

 $lastFlag$ = ($null$, $null$)
 
 \While{True}{
  \eIf{$cur$ is critical}{
    $lastCritical$ = $cur$
  }{
    ExamineNode ($cur$, $parent$, $lastCritical$, $lastFlag$)\;
  }
  \If{$cur$ is leaf node}{
    break\;
  }
  $next$ = FindNextNode ($cur$, $k$)\;
  
  $parent$ = $cur$;  
  $cur$ = $next$\;
 }
 \If{$lastFlag$ is not $null$}{
    Set $lastFlag.node$ critical\;

    Update $lastFlag.parent$ bitmap\;
 }
 
 \If{$lastCritical$ is not $null$} {
    ProactiveSplit ($lastCritical$)\;
 }
 
 Insert ($cur$, $k$, $v$)\;
\end{minipage}
\end{algorithm}

We set a flag when a node becomes critical. For each non-leaf node, we additionally maintain a bitmap that records the state of its children. Using only a counter to track the number of critical children is insufficient, since node splits may cause the loss of state information. 

In Algorithm \ref{alg:ins_algo}, we do not cover how to inform the parent node that one of the children nodes has become critical. Algorithm \ref{alg:insert_imple} is expanded over Algorithm \ref{alg:ins_algo} by recording the last node that has a change of child node state. Instead of eagerly propagating the state change to the ancestor nodes, this process is delayed until the next visit of the path. This delayed propagation is key to keeping the overhead low under contention. Instead of taking extra latches on the way up, we only update the last flagged node after finishing the descent as well as its parent node;
this incurs at most one extra I/O when the flagged node is a leaf, and no additional I/O when it is a non-leaf node.
The design leverages the fact that the \sys{} needs to identify only the bottommost critical node to preserve the one-split guarantee. Any ancestor that becomes critical will be discovered on the next traversal and 
will be
handled then.

Observe that Algorithm~\ref{alg:insert_imple} splits only the bottommost critical node; critical nodes closer to the root are deferred in splitting. Therefore, correctness is guaranteed as long as the algorithm identifies the bottommost critical node, performs the split, and propagates the resulting change only to its parent.

\begin{algorithm}[h]
\caption{ExamineNode ($node$, $parent$, $lastCritical$, $lastFlag$)}\label{alg:exam_node}
\begin{minipage}{\columnwidth}
\KwData{$node$, $parent$, $lastCritical$, $lastFlag$}
    \eIf{$node$ is leaf node}{
        \If{$F(node)\leq 1$}{
            $lastFlag$ = ($node$, $parent$)
        }
    } {
        \If{$F(node)\leq \sum node.bitmap$}{
            $lastCritical$ = $node$
            
            $lastFlag$ = ($node$, $parent$)
        }
    }
\end{minipage}
\end{algorithm}

In Algorithm~\ref{alg:insert_imple}, while descending the \sys{}, we examine the critical flag first. If the node is not critical, we use Algorithm~\ref{alg:exam_node} to examine whether this insertion will make it critical. A leaf becomes critical when it will be full after the insert, so we check for exactly one free slot. 
If a non-leaf becomes critical, we record it. We also update the parent node if the node is the closest-to-leaf node that just transitions to critical.
These checks align with Definition~\ref{def:critical} while keeping the split point at the bottommost already-critical node.

\subsection{Supporting Concurrency in the \sys{}}
Since a database index may be accessed concurrently by multiple operations, we protect the \sys{} using optimistic lock coupling (OLC), a widely used technique in high-performance tree indexes~\cite{leis2019optimistic}. As  in Algorithm~\ref{alg:insert_imple}, the insertion procedure consists of two phases:
(1) a read/analysis phase that traverses the search path and analyzes the states of the visited nodes, and
(2) a write phase that performs the update, inserts into the leaf, splits the bottommost critical node, and marks the next node as critical, if needed.

A challenge arises because marking a node as critical may affect another concurrent insertion that shares ancestor nodes along its search path. Consider two insertions, $I_1$ and $I_2$. Suppose that $I_1$ visits $N_1, N_2, N_3, N_4$ and marks $N_3$ as critical. On its next access of $N_2$, $I_1$ may determine that $N_2$ should now also become critical and is the candidate to split. Meanwhile, $I_2$ visits $N_1, N_2, N_5, N_6$ and initially encounters no critical nodes. If $I_1$ marks $N_2$ as critical while $I_2$ is still traversing the tree, then $I_2$ might incorrectly assume that $N_2$ remains non-critical.

To avoid this 
racing condition, 
the read phase in the OLC protocol verifies that none of the node states have changed during traversal; if any version mismatch is detected, the traversal restarts. This ensures that updates to critical-node metadata are consistent and 
are
visible across concurrent insertions, 
and hence
preserving correctness without resorting to coarse-grained locking. At the same time, we minimize restart overhead by checking critical flags opportunistically: a reader that sees a critical node records it and continues without upgrading latches until the write phase. This keeps the read phase optimistic and amortizes validation cost across multiple operations. In the experiments, this design produces fewer restarts over the 
CLRSBtree that splits preemptively from root to leaf (Section~\ref{subsubsec:clrsbtree})
that must restart on every split regardless of 
the
contention level.

\begin{figure}[t]
  \centering
  \begin{subfigure}{0.48\linewidth}
      \centering
      \includegraphics[width=\linewidth]{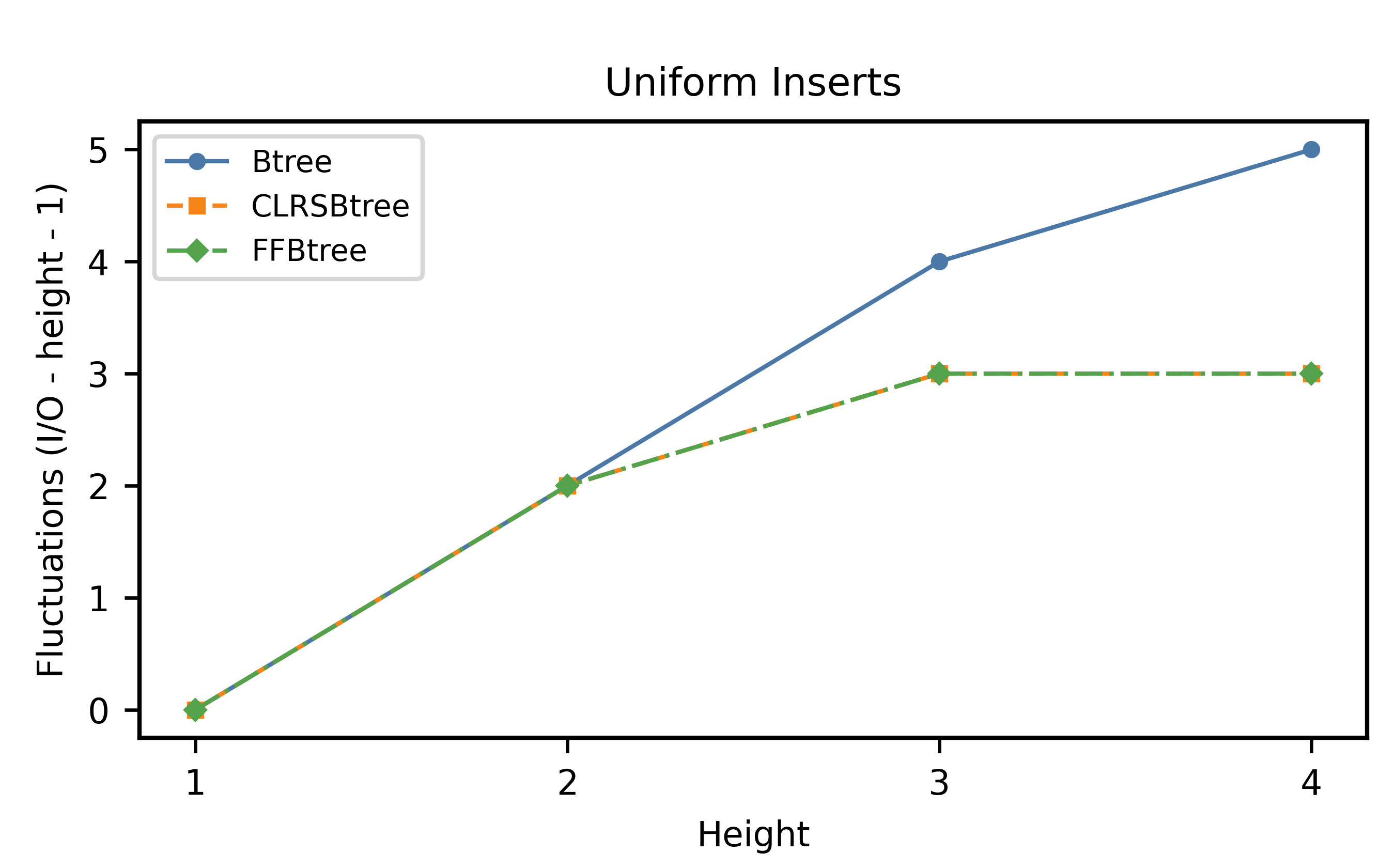}
      \caption{}
      \label{fig:insbuf_rand_io}
  \end{subfigure}
  \hfill%
  \begin{subfigure}{0.48\linewidth}
      \centering
      \includegraphics[width=\linewidth]{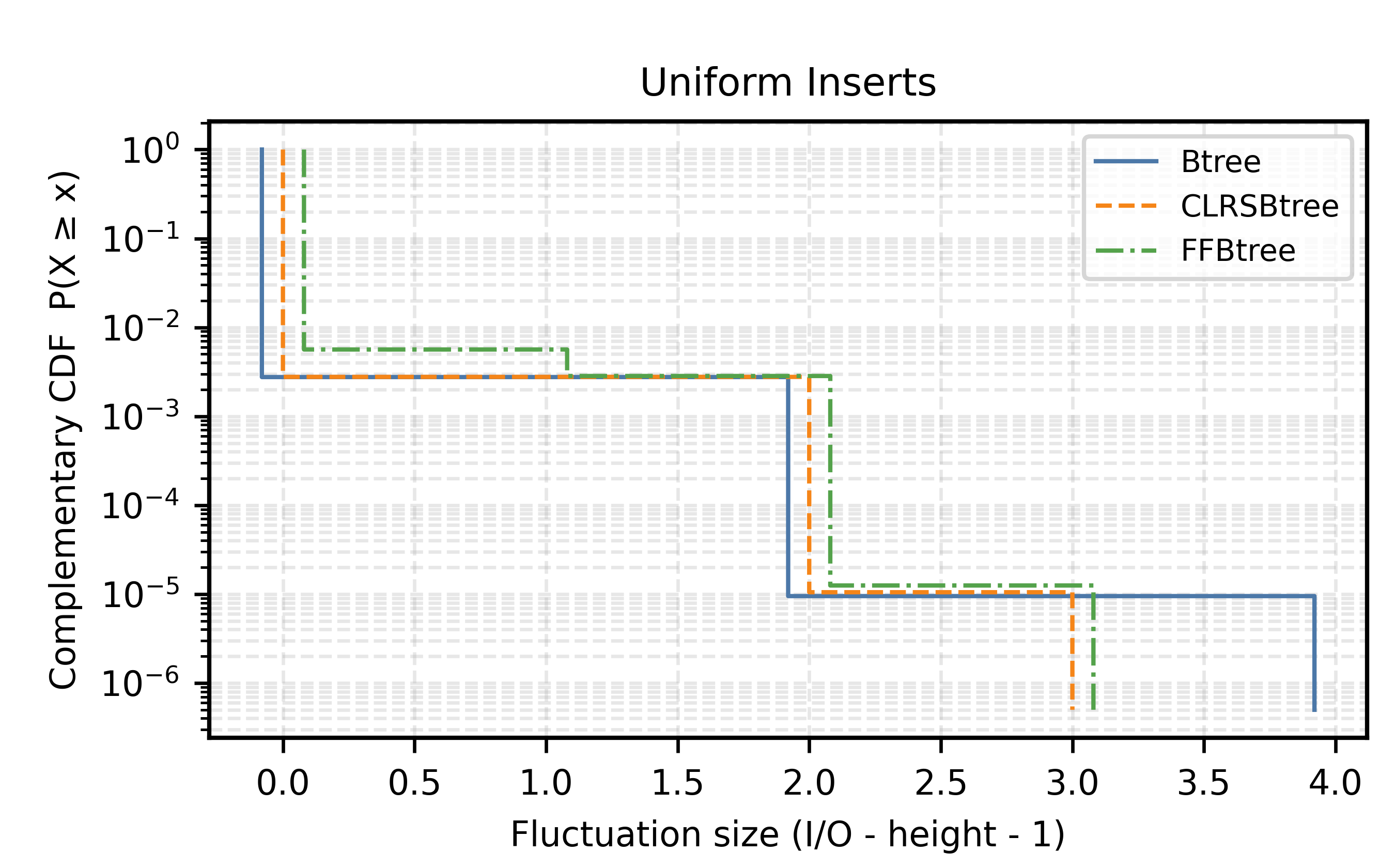}
      \caption{}
      \label{fig:insbuf_rand_cdf}
  \end{subfigure}%
  \vfill
  \begin{subfigure}{0.48\linewidth}
      \centering
      \includegraphics[width=\linewidth]{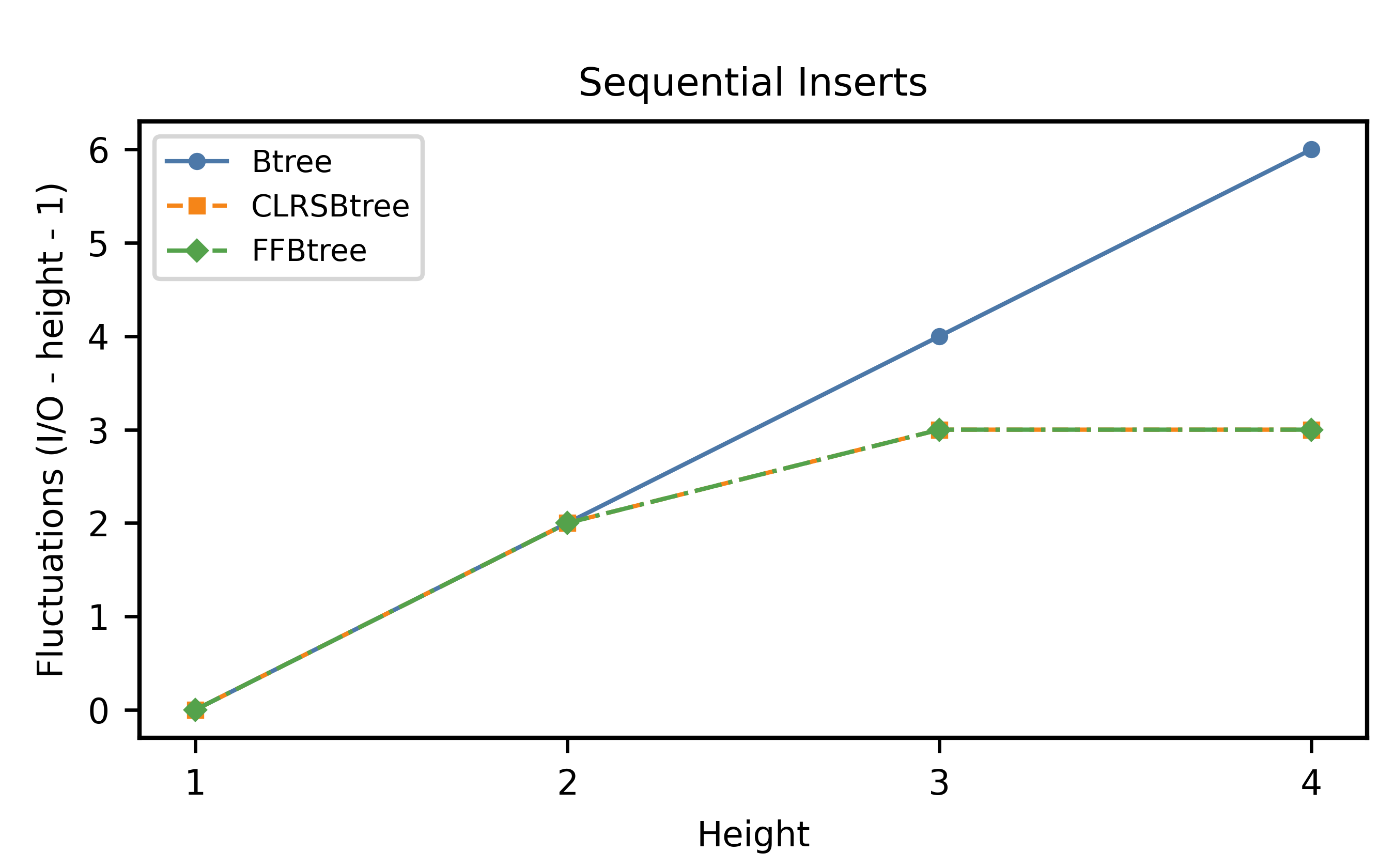}
      \caption{}
      \label{fig:insbuf_seq_io}
  \end{subfigure}
  \hfill%
  \begin{subfigure}{0.48\linewidth}
      \centering
      \includegraphics[width=\linewidth]{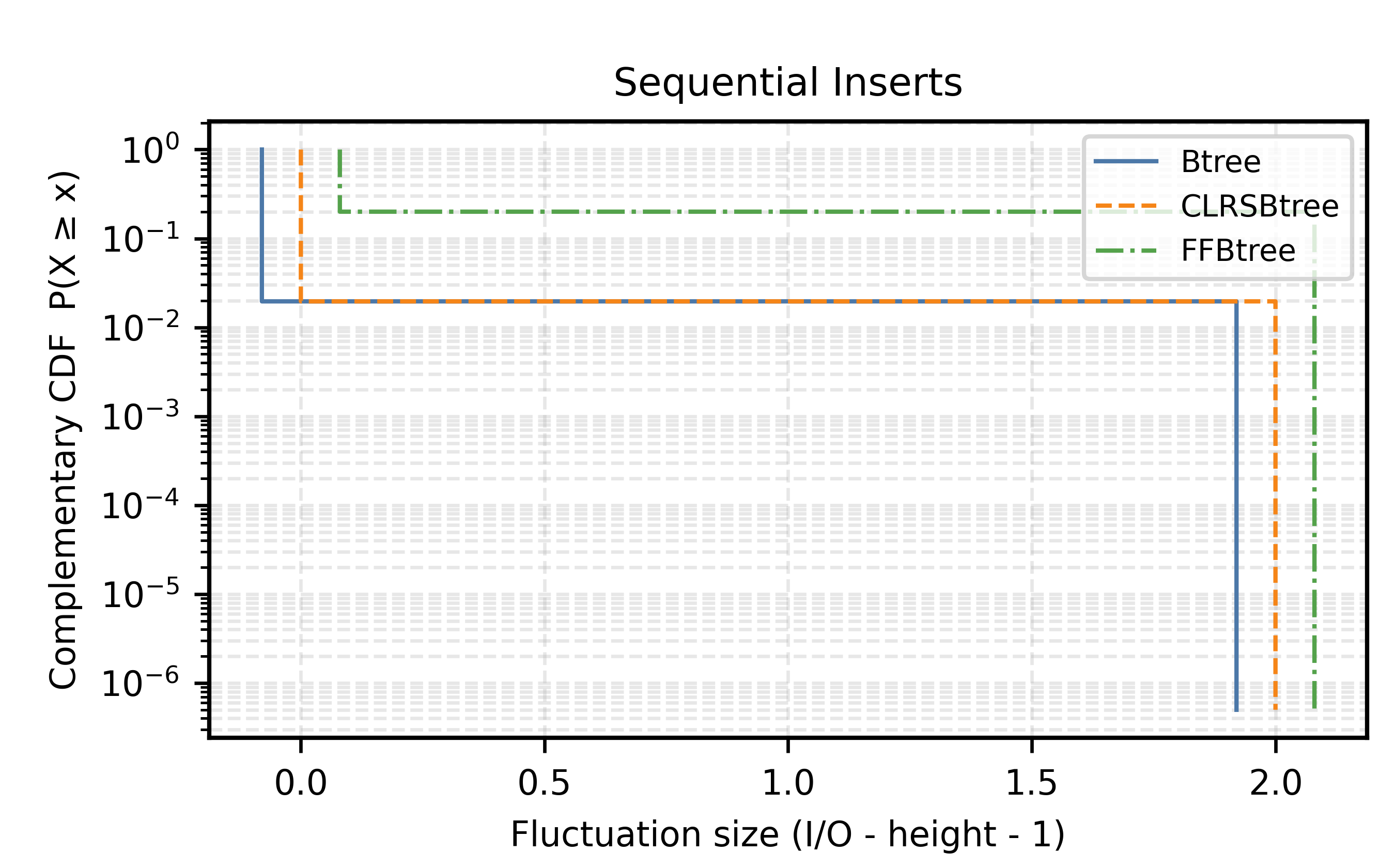}
      \caption{}
      \label{fig:insbuf_seq_cdf}
  \end{subfigure}%
  \vfill
  \begin{subfigure}{0.48\linewidth}
      \centering
      \includegraphics[width=\linewidth]{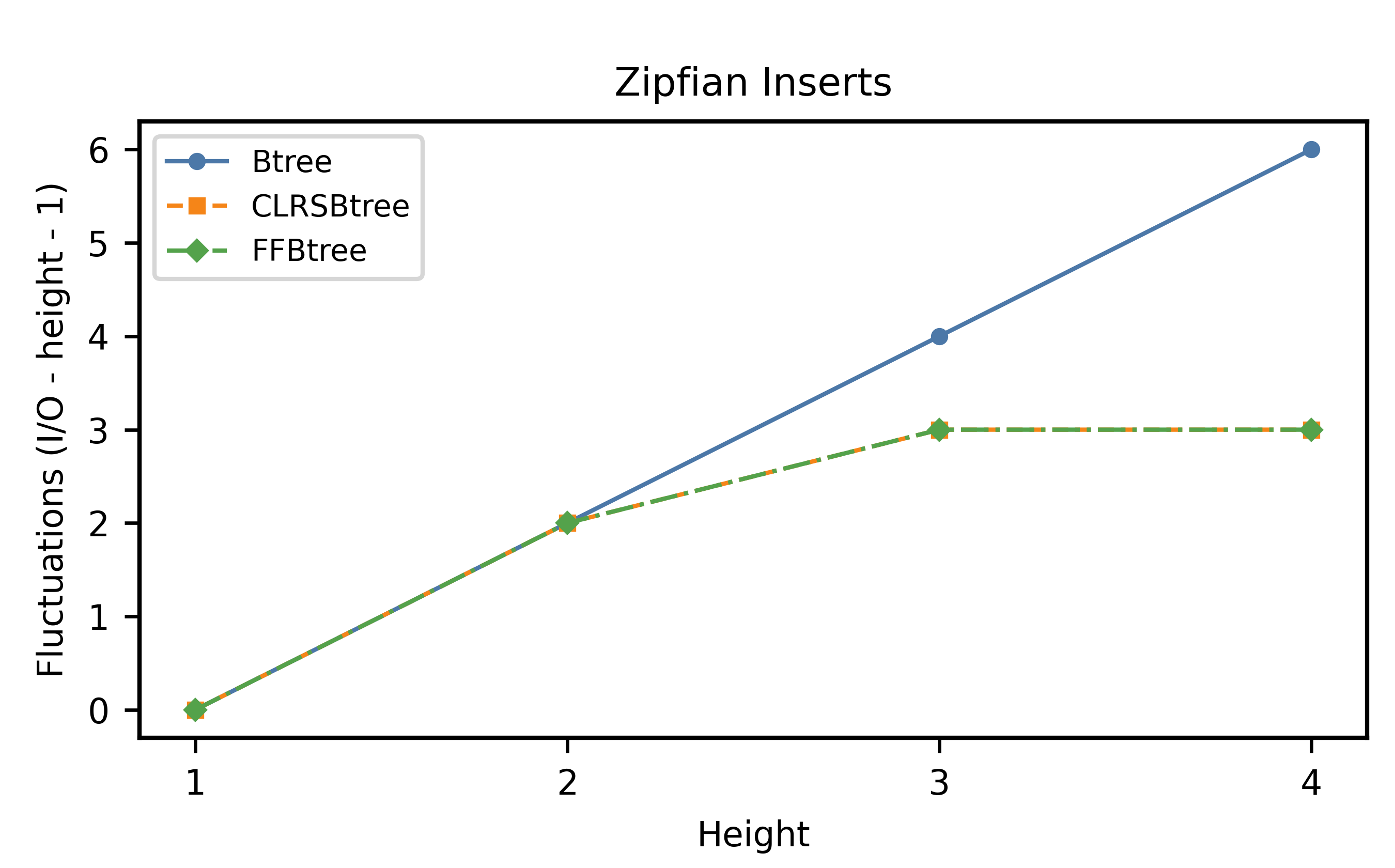}
      \caption{}
      \label{fig:insbuf_zipf_io}
  \end{subfigure}
  \hfill%
  \begin{subfigure}{0.48\linewidth}
      \centering
      \includegraphics[width=\linewidth]{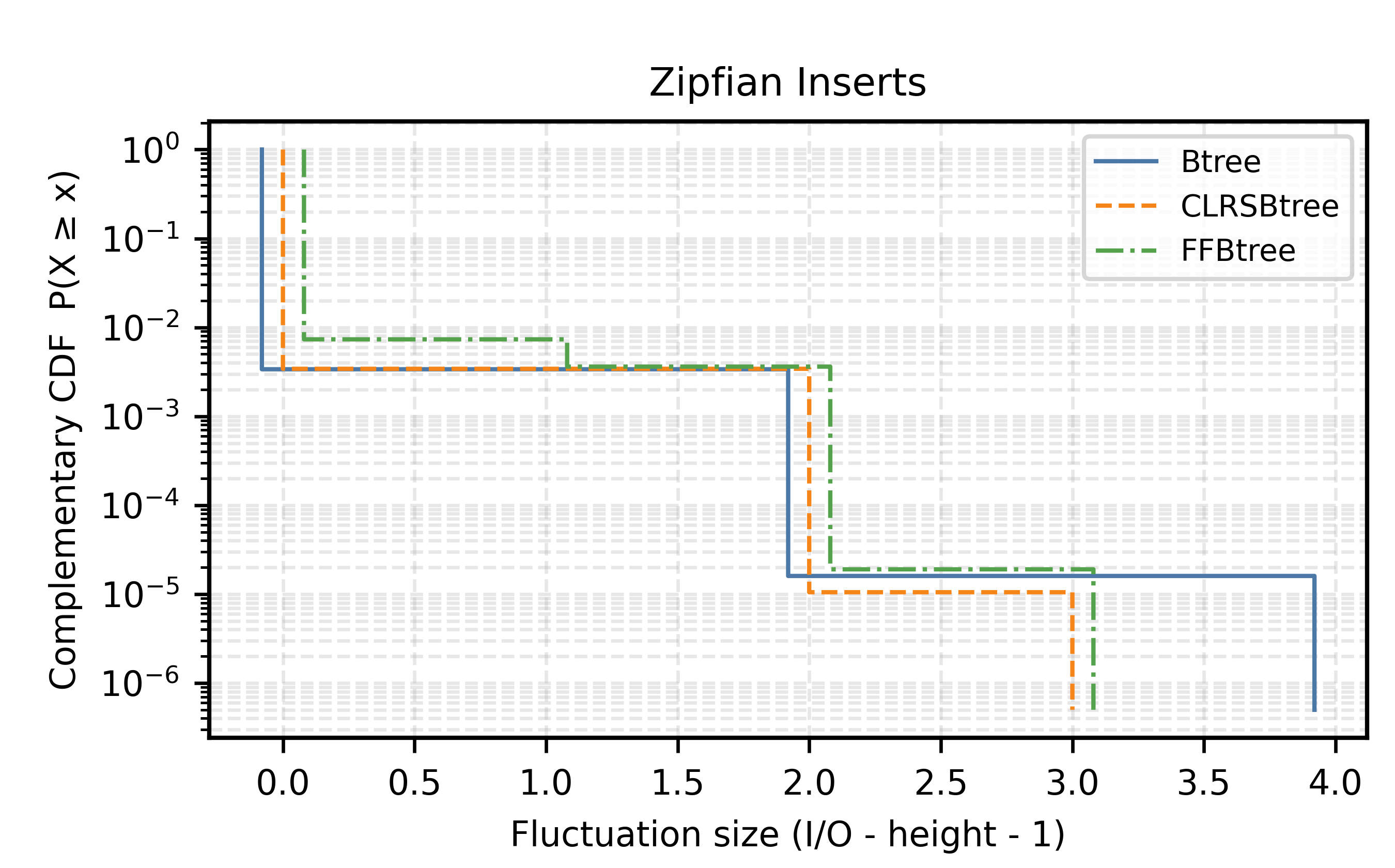}
      \caption{}
      \label{fig:insbuf_zipf_cdf}
  \end{subfigure}
  \caption{(a), (c), (e): Fluctuation vs. tree height. Maximum fluctuation per tree height (I/O - height - 1) for each index variant. (b), (d), (f): Complementary CDF of fluctuations. Complementary CDF (CCDF) shows the probability that the fluctuation size is at least $x$, i.e., $P(X \geq x)$.}
  \label{fig:insbuf}
\end{figure}

\section{Experimental Evaluation}\label{sec:exp}
In this section, we evaluate the \sys{} against traditional \bplustree{} implementations.
The theoretical analysis guarantees correctness and bounded update behavior. Our experiments complement this by measuring the empirical costs of insertions in practice across a range of tree heights and input distributions.
The baseline indexes are the \bplustree{} that splits a node when the node becomes overfull due to an insertion; and the CLRSBtree that splits preemptively (Section~\ref{subsubsec:clrsbtree}).
We report tree insertion results 
followed by a simulation to test trees with bigger tree heights, and finally test the performance of concurrent inserts, scalability, and utilization tradeoffs.

We use a 192-core machine with Intel(R) Xeon(R) Platinum 8168 CPUs @ 2.70GHz, 2~TB memory, and Linux 4.15.0. All indexes are implemented in C++; unless stated otherwise, the node size is 4~KB.

\subsection{Insertion-Only Workload}

We compare the three real indexes under sequential (ascending), uniform random, and Zipfian insert workloads. Each index starts empty and is loaded with 100M keys.


Figure~\ref{fig:insbuf} reports maximum fluctuation per height and the complementary CDF (CCDF) of per-insert fluctuations. The CCDF, $P(X \geq x)$, shows how often large deviations occur: a heavier tail means large spikes appear more frequently. The left column plots (Figures~\ref{fig:insbuf_rand_io},\ref{fig:insbuf_seq_io} and~\ref{fig:insbuf_zipf_io}) show the maximum fluctuation observed at each tree height, while the right column plots (Figures~\ref{fig:insbuf_rand_cdf},\ref{fig:insbuf_seq_cdf} and~\ref{fig:insbuf_zipf_cdf}) show the full distribution of fluctuation sizes for the same workload.

Under uniform random inserts  (Figure~\ref{fig:insbuf_rand_io}), the \bplustree{} reaches a maximum fluctuation of five at Height four, whereas the CLRSBtree and the \sys{} are capped at three, indicating no split propagation. The CCDF confirms a longer tail for the \bplustree{}, meaning that large spikes occur more often. Under sequential inserts (Figures~\ref{fig:insbuf_seq_io}, \ref{fig:insbuf_seq_cdf}), the \bplustree{}’s tail grows due to repeated rightmost-path propagations, while the CLRSBtree and the \sys{} remain stable. Under Zipfian inserts (Figures~\ref{fig:insbuf_zipf_io}, \ref{fig:insbuf_zipf_cdf}), the \sys{} again caps fluctuations at three and avoids the larger peaks seen in the \bplustree{}, illustrating that the bound holds even under skewed access patterns.

\subsection{Simulated Indexes}

\begin{figure*}[ht!]
    \centering
    \includegraphics[width=\linewidth]{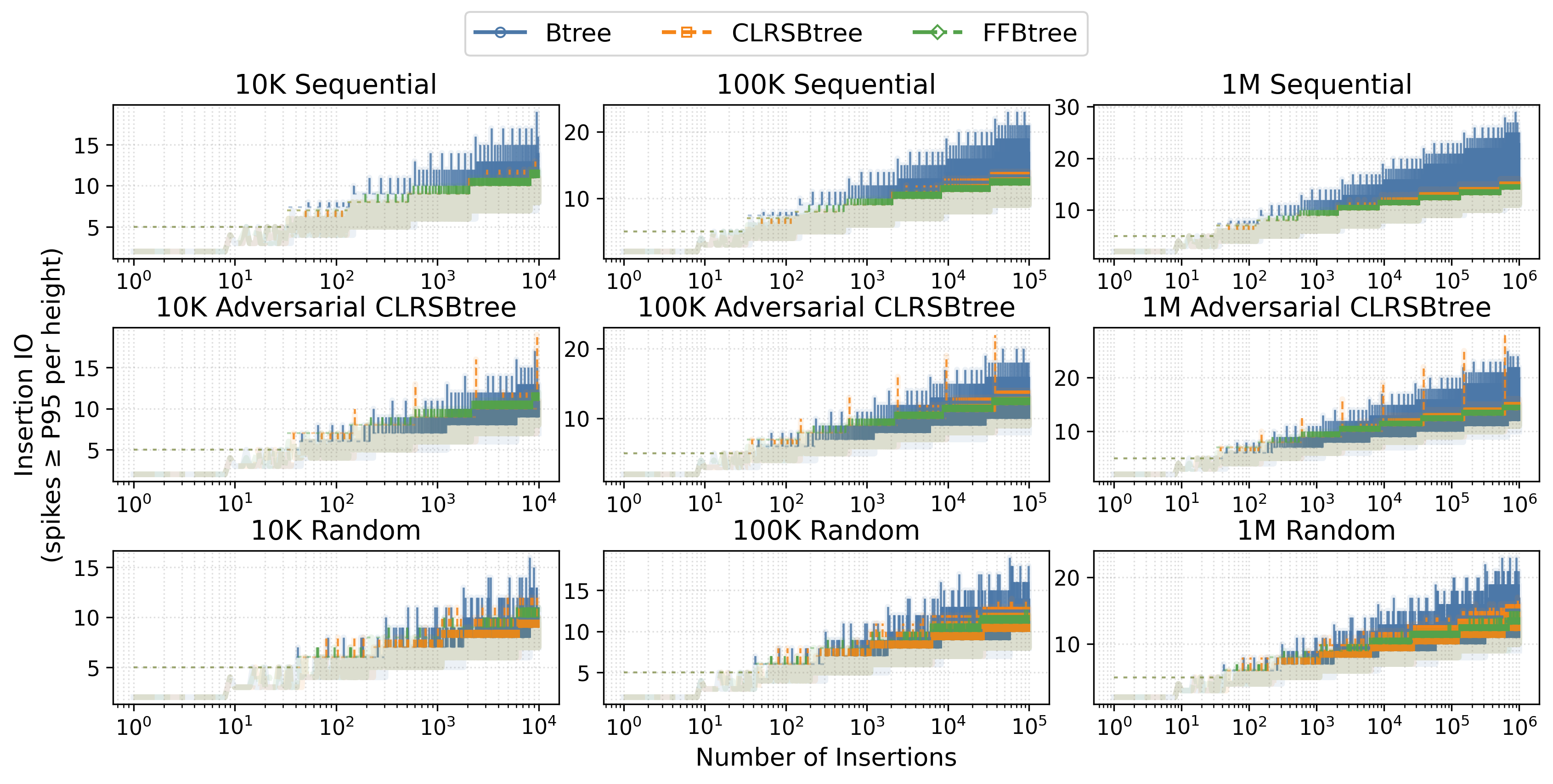}
    \caption{Simulated indexes comparison in time series.}\label{fig:sim_time}
\end{figure*}
\begin{figure*}[ht!]
    \centering
    \includegraphics[width=\linewidth]{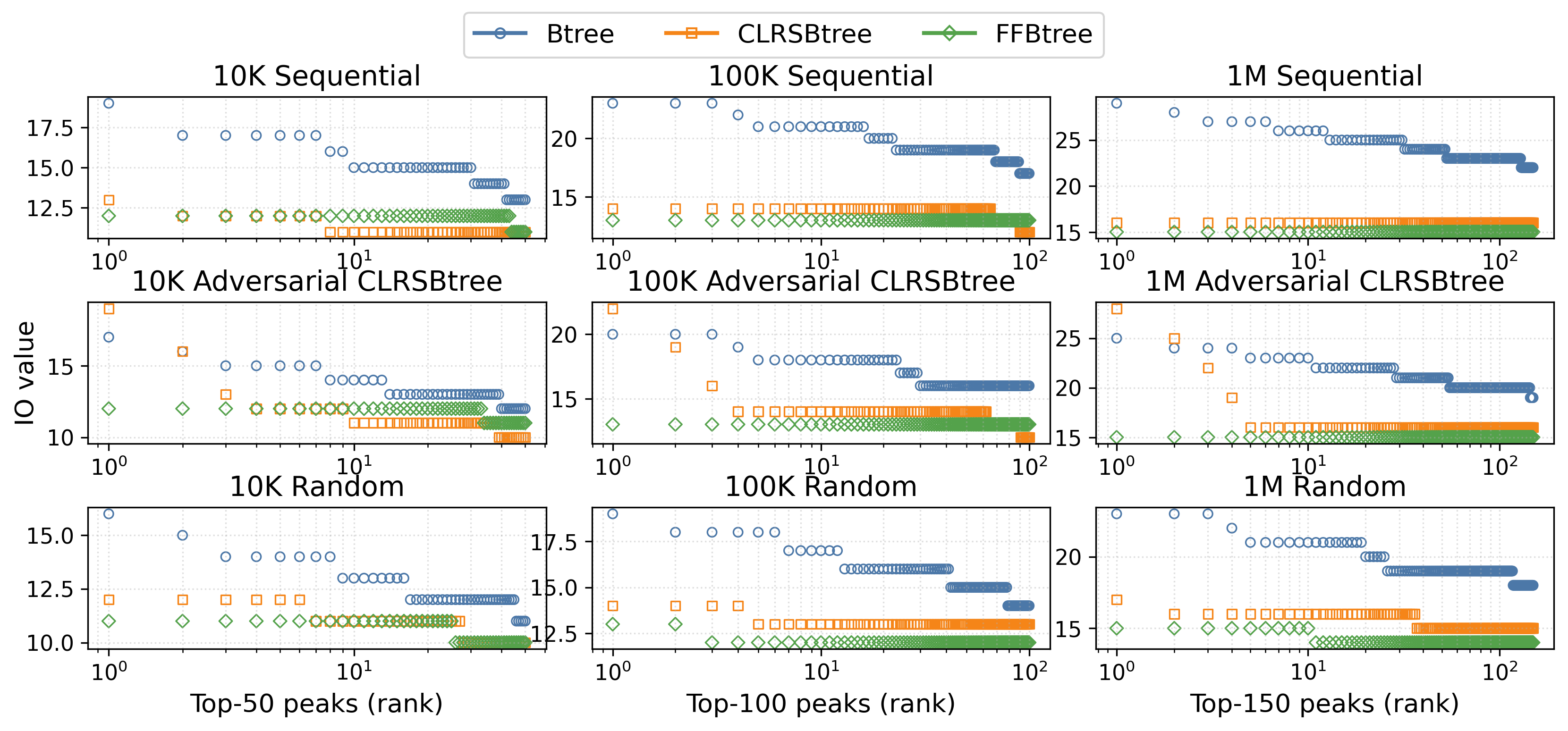}
    \caption{Top-$k$ I/O peaks per index under each dataset.}\label{fig:sim_topk}
\end{figure*}

We compare the \sys{} with the \bplustree{} and the CLRSBtree under a simulated environment where nodes track only occupancy counts. We generate sequential, uniform random, and CLRSBtree-adversarial datasets (Section~\ref{subsubsec:data}) and use a small node size (8 entries) to emulate deep trees. Experiments run at 10K, 100K, and 1M inserts, with a fresh buffer per insertion.
The simulation intentionally removes buffer effects and page content overhead so that the only source of fluctuation is split behavior. This isolates the algorithmic contribution of the \sys{} and avoids conflating its behavior with cache hit rates or device noise.

Figure~\ref{fig:sim_time} plots per-insert I/O over time for each workload, where spikes correspond to split activity along the insertion path. The \bplustree{} shows the largest peaks on sequential data due to regular rightmost-path propagations, while the CLRSBtree-adversarial dataset produces the highest peaks for the CLRSBtree. Across all datasets, the \sys{} maintains the smallest spikes by limiting each insertion to one split, indicating that the timing and magnitude of rare events are both reduced.
Figure~\ref{fig:sim_topk} ranks the top-$k$ I/O peaks, making tail behavior explicit rather than relying on averages. The gap between the \sys{} and the baselines widens as the dataset grows, showing that the \sys{} keeps extreme events bounded even as the tree height increases, whereas the \bplustree{} and the CLRSBtree accumulate larger outliers. Together, the two figures show that the improvement is not limited to a few benign workloads; but rather, it persists under adversarial inputs and at larger scales.

\subsection{Concurrent Insertions}\label{subsec:cc}

Next, we compare the \sys{} and the CLRSBtree under concurrent inserts using optimistic lock coupling (OLC) under uniform random, sequential, and Zipfian write distributions (100M operations). To isolate index behavior, we use in-memory indexes and inject synthetic read/write latencies: a read latency is added on the first access of a node, and a write latency is added on dirty-node flushes.
We vary thread counts from 1 to 32. The injected latency model ensures that each additional page touch has a visible cost, making split propagation and restart behavior observable even without a storage device. 
For each thread count, we report average latency, latency variability measured as the mean windowed range (\(\max-\min\), window size = 100K operations), and average restart count.

Across all three distributions and thread counts, \sys{} maintains substantially lower latency variability than CLRSBtree. This gap is persistent and becomes particularly important under skewed access (Zipfian), where contention-induced instability is most likely. In the same experiments, \sys{} also shows lower restart counts, indicating fewer conflict-driven retries and a more stable concurrency profile.
Average latency smooths short-lived disruptions, whereas latency variability is sensitive to bursty slowdowns and therefore exposes transient instability.
Restart counts remain close to 1 most of the time, indicating that most operations commit on the first attempt under all workloads. The occasional spikes indicate periods where optimistic progress is temporarily disrupted by synchronized conflicts. This is particularly visible under skewed access (Zipfian), where hot-key concentration makes conflict episodes more likely.

\sys{} does not minimize mean latency in this setting; CLRSBtree is often lower on average latency. However, that mean-latency advantage comes with noticeably larger variance and higher restart intensity. In other words, CLRSBtree offers better central tendency, while \sys{} offers tighter dispersion and lower conflict turbulence.

\begin{figure}[h]
    \centering
    \begin{subfigure}{0.5\linewidth}
        \centering
        \includegraphics[width=\linewidth]{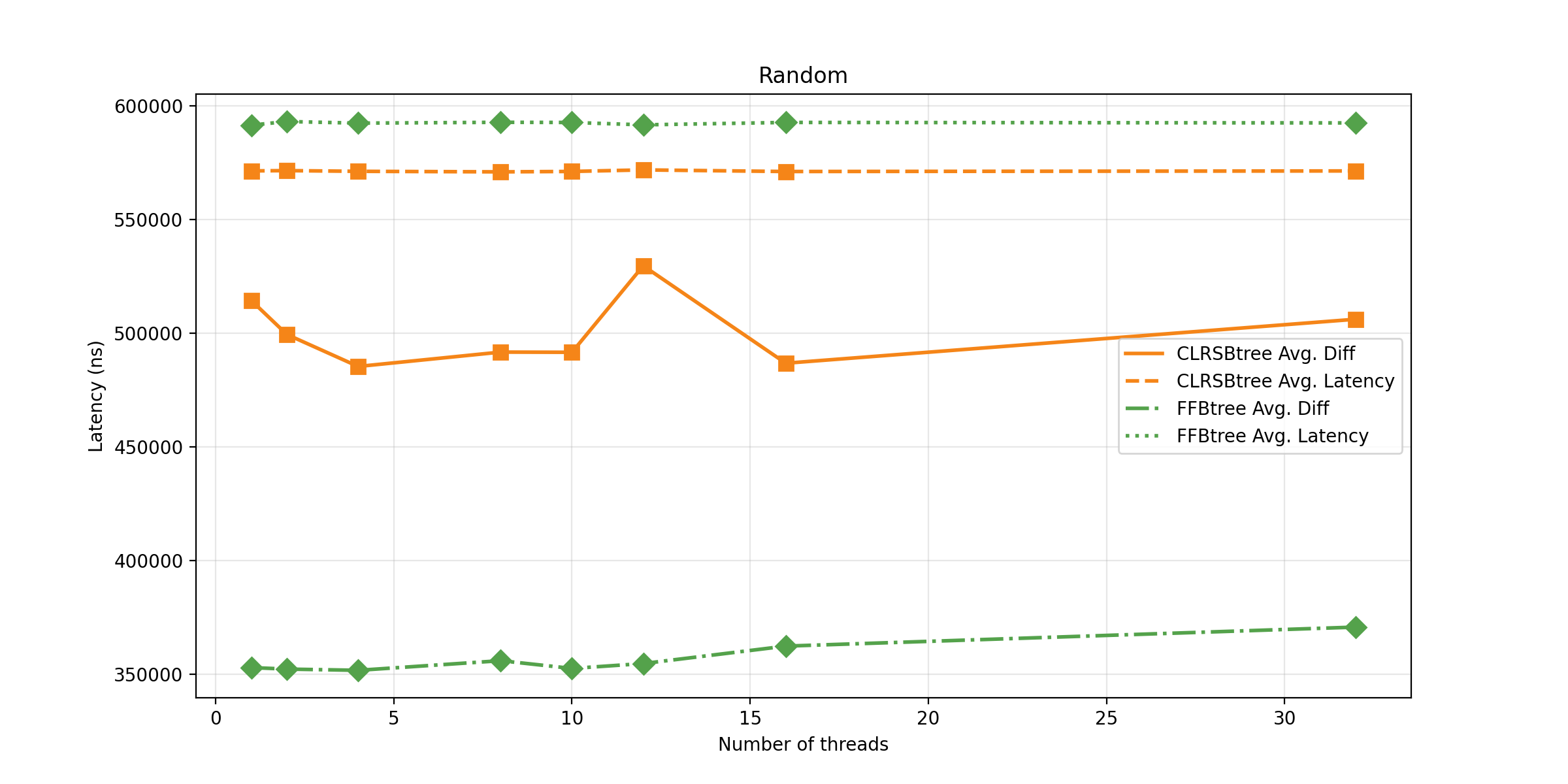}
        \caption{}
        \label{fig:cc_rand_lat}
    \end{subfigure}%
    \hfill
    \begin{subfigure}{0.5\linewidth}
        \centering
        \includegraphics[width=\linewidth]{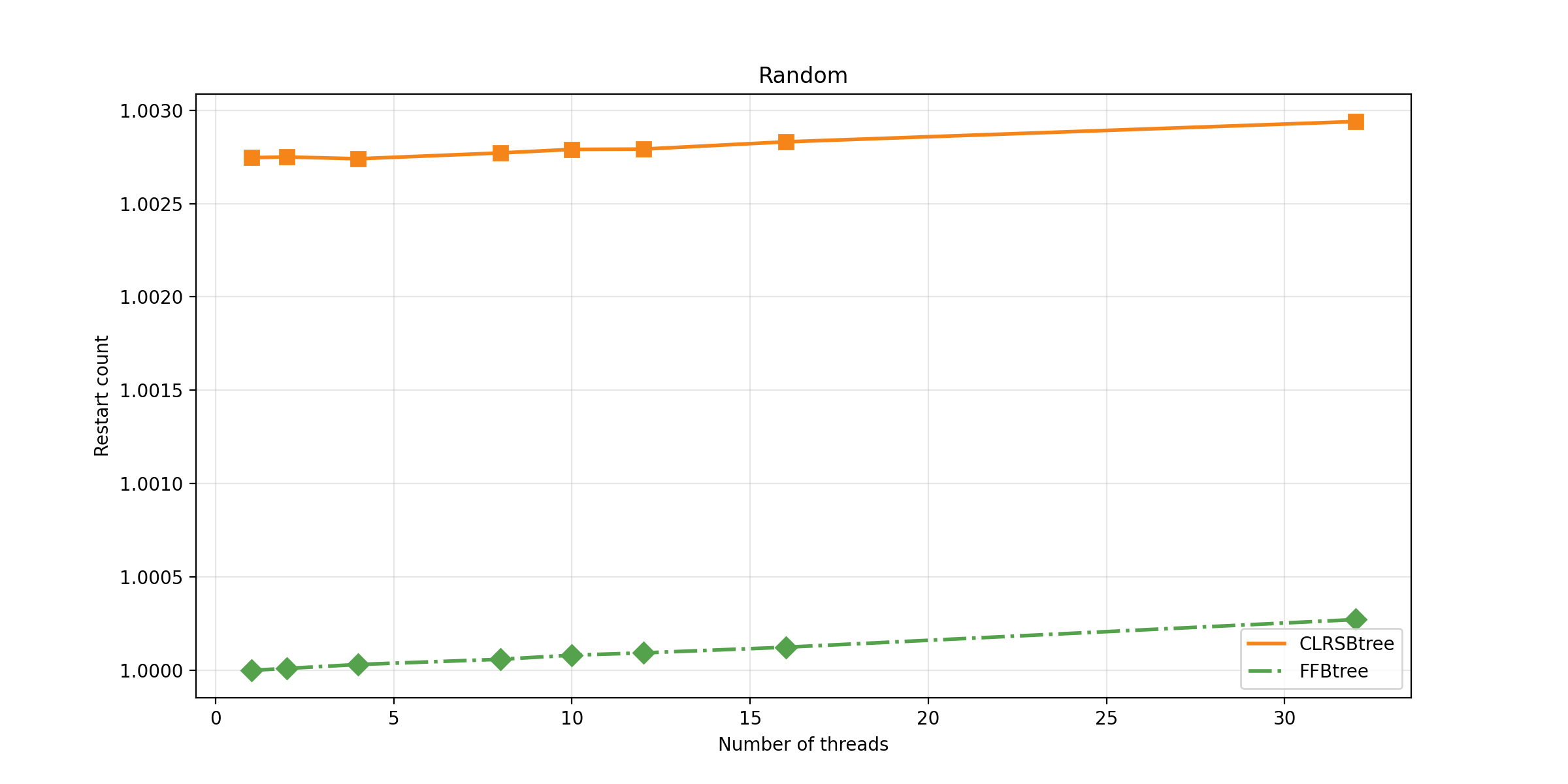}
        \caption{}
        \label{fig:cc_rand_restart}
    \end{subfigure}
    \vfill
    \begin{subfigure}{0.5\linewidth}
        \centering
        \includegraphics[width=\linewidth]{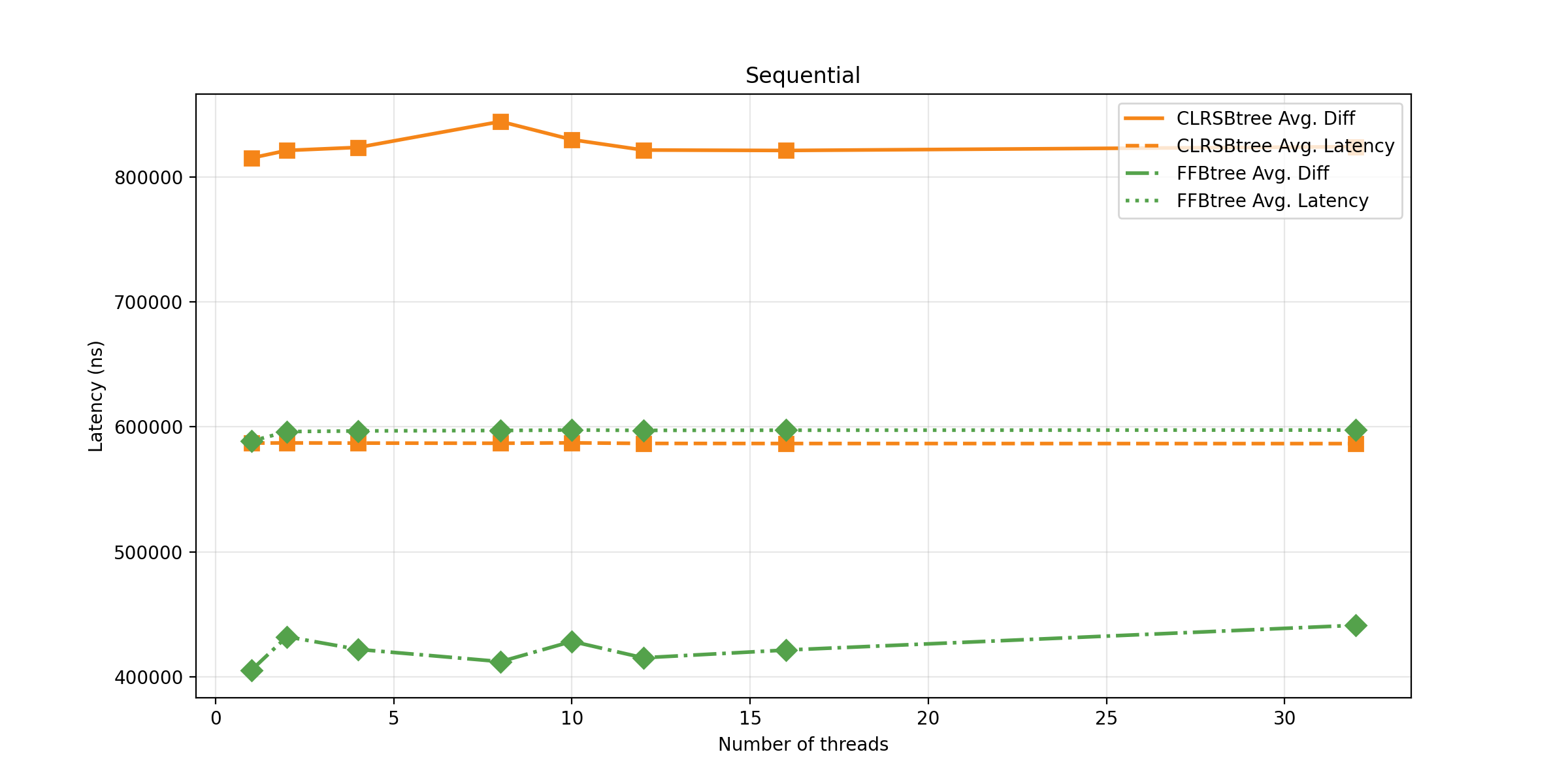}
        \caption{}
        \label{fig:cc_seq_lat}
    \end{subfigure}%
    \hfill
    \begin{subfigure}{0.5\linewidth}
        \centering
        \includegraphics[width=\linewidth]{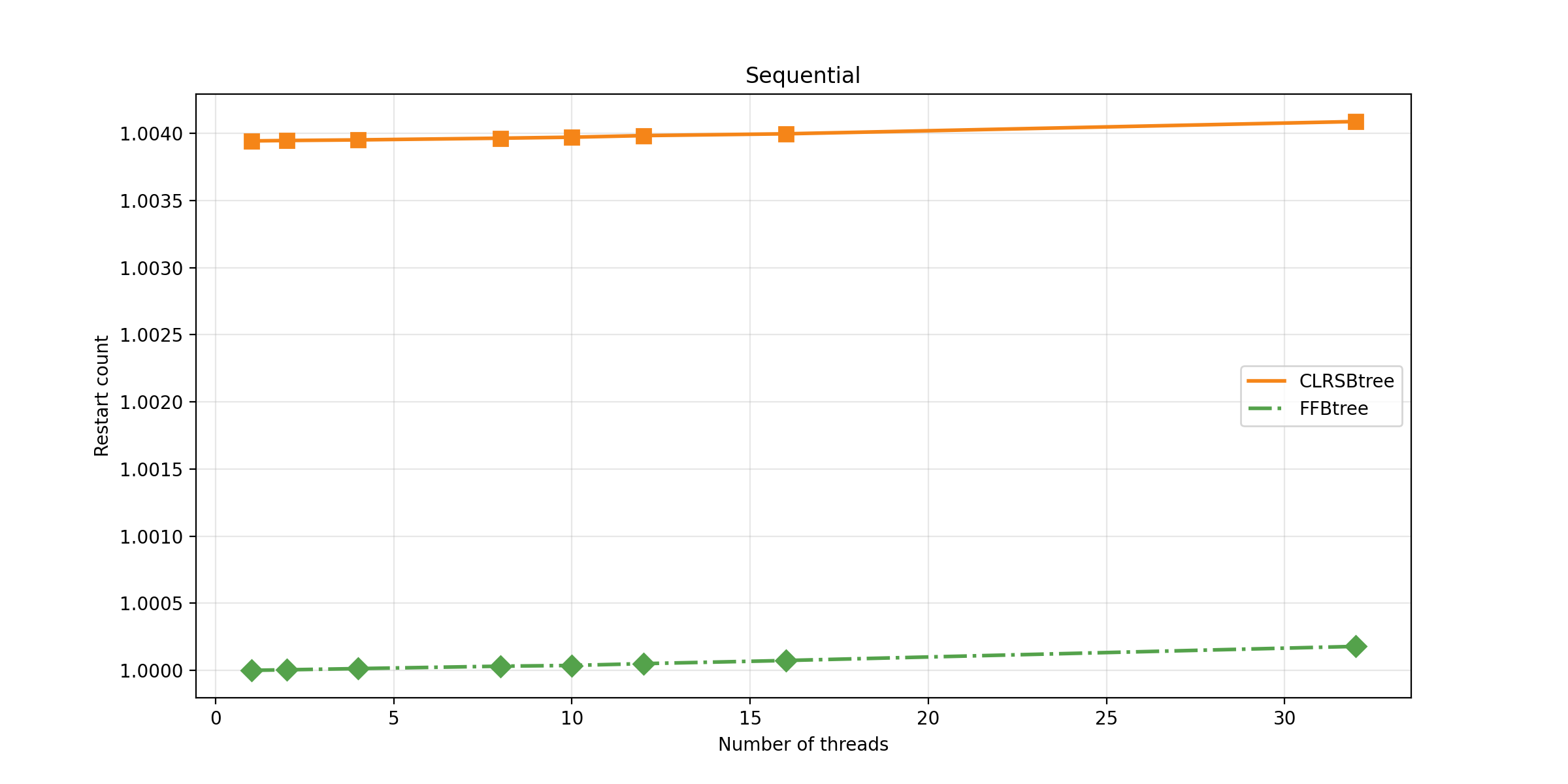}
        \caption{}
        \label{fig:cc_seq_restart}
    \end{subfigure}
    \vfill
    \begin{subfigure}{0.5\linewidth}
        \centering
        \includegraphics[width=\linewidth]{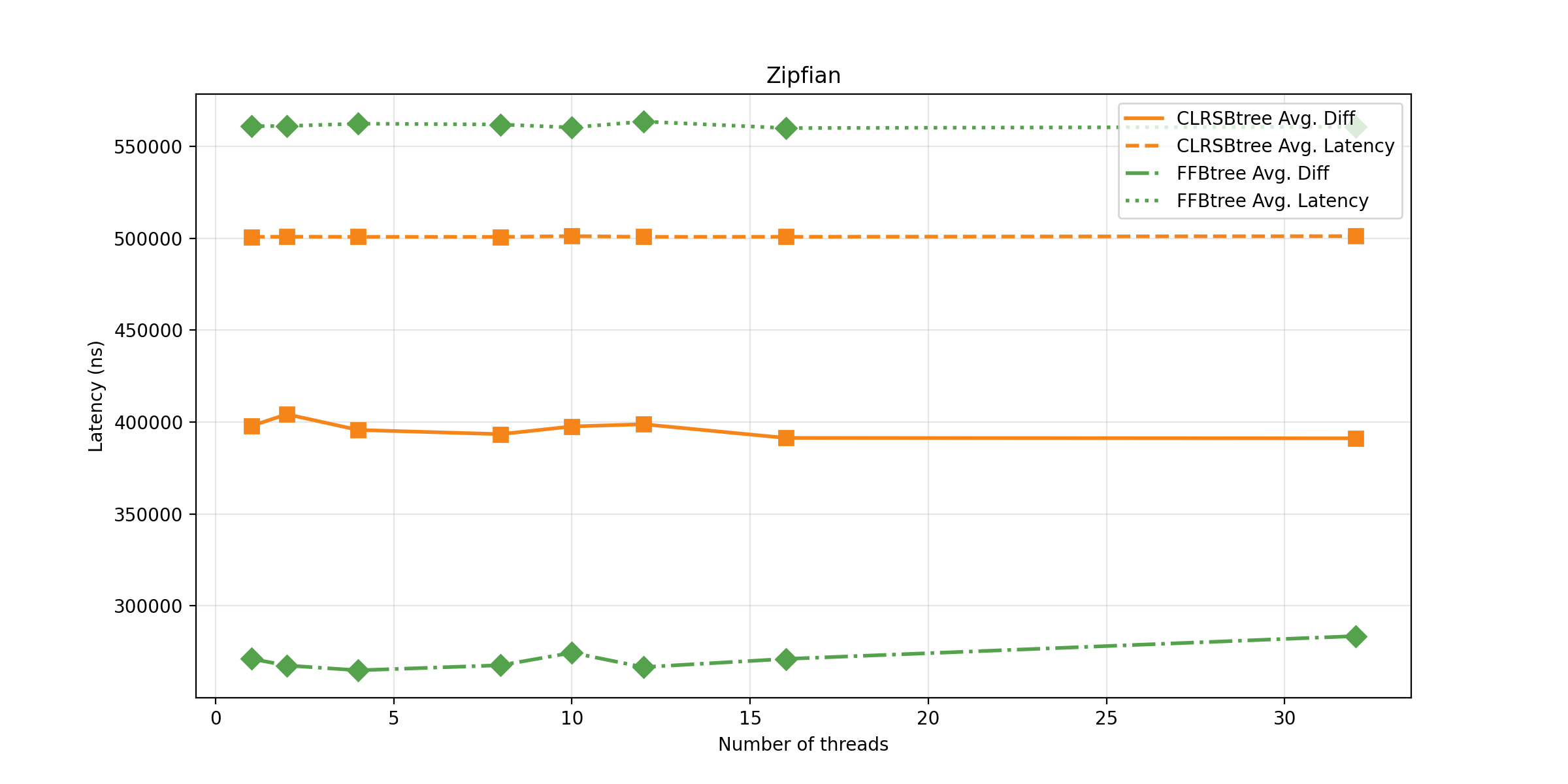}
        \caption{}
        \label{fig:cc_zipf_lat}
    \end{subfigure}%
    \hfill
    \begin{subfigure}{0.5\linewidth}
        \centering
        \includegraphics[width=\linewidth]{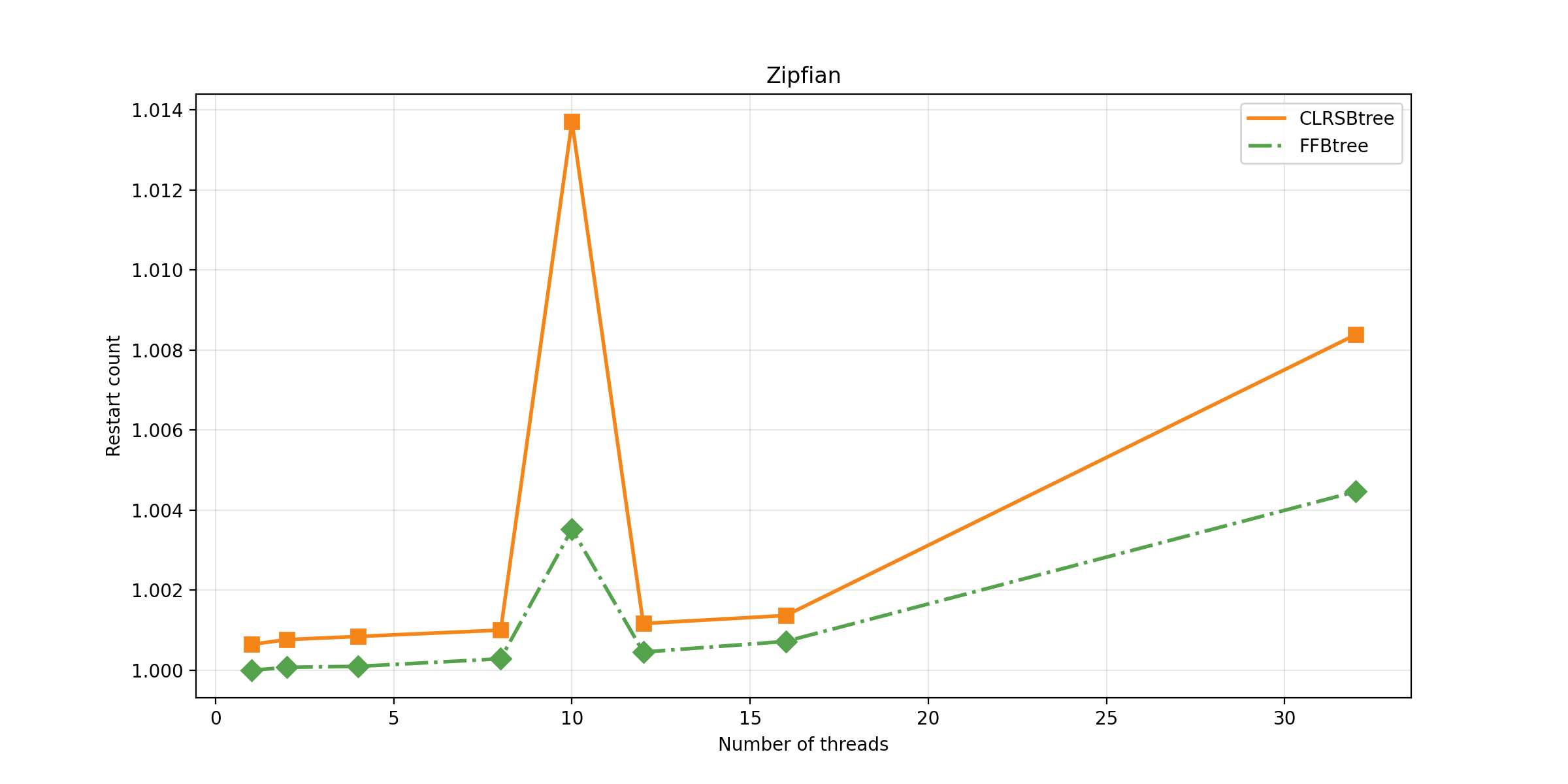}
        \caption{}
        \label{fig:cc_zipf_restart}
    \end{subfigure}
    \caption{(a), (c), (e): Average latency and latency fluctuation. (b), (d), (f): Restart counts.}
    \label{fig:cc_w}
\end{figure}

\subsection{Scalability}

\begin{figure}
    \centering
    \begin{subfigure}{0.325\linewidth}
        \centering
        \includegraphics[width=\linewidth]{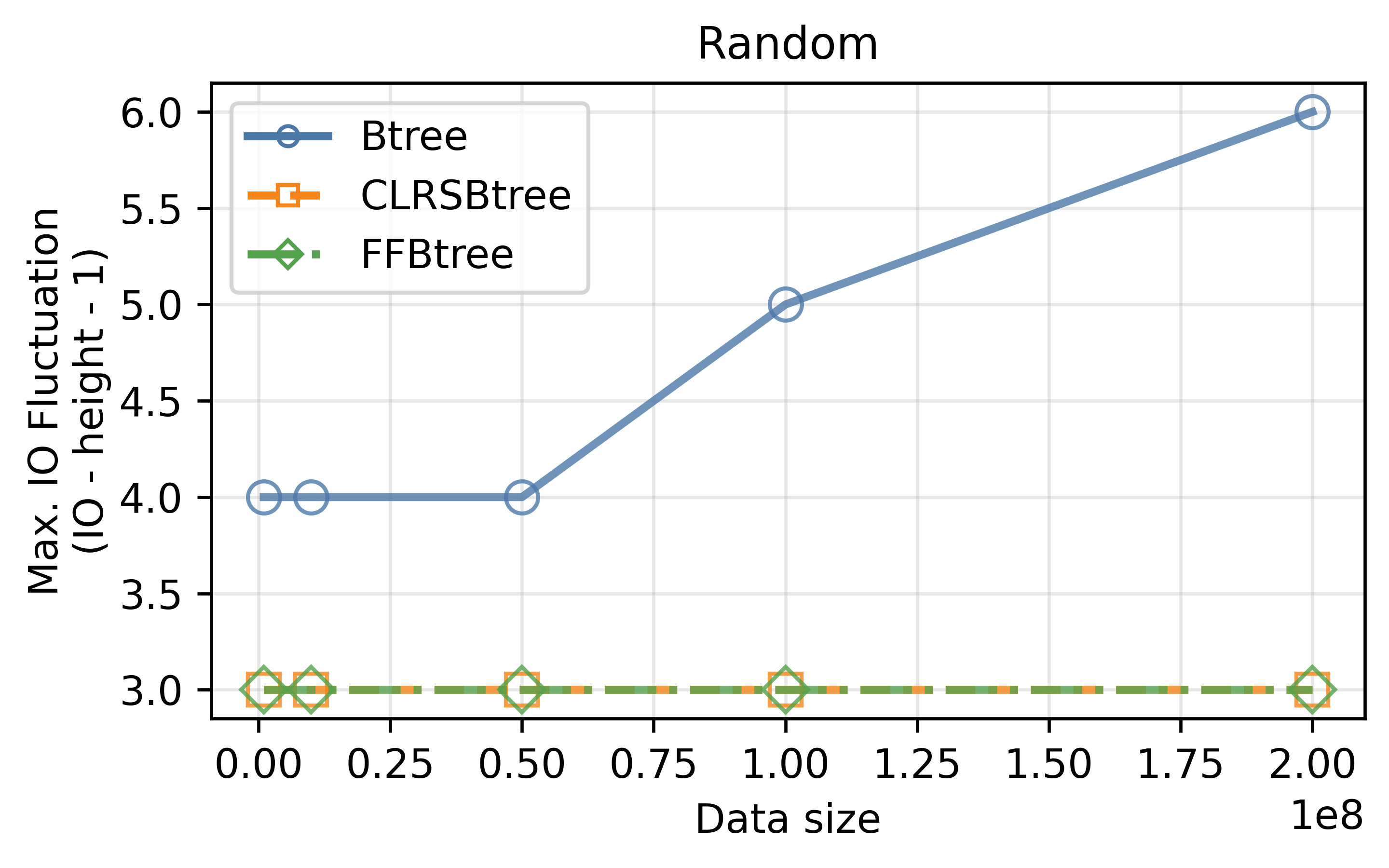}
        \caption{}
        \label{fig:scale_rand}
    \end{subfigure}
    \hfill
    \begin{subfigure}{0.325\linewidth}
        \centering
        \includegraphics[width=\linewidth]{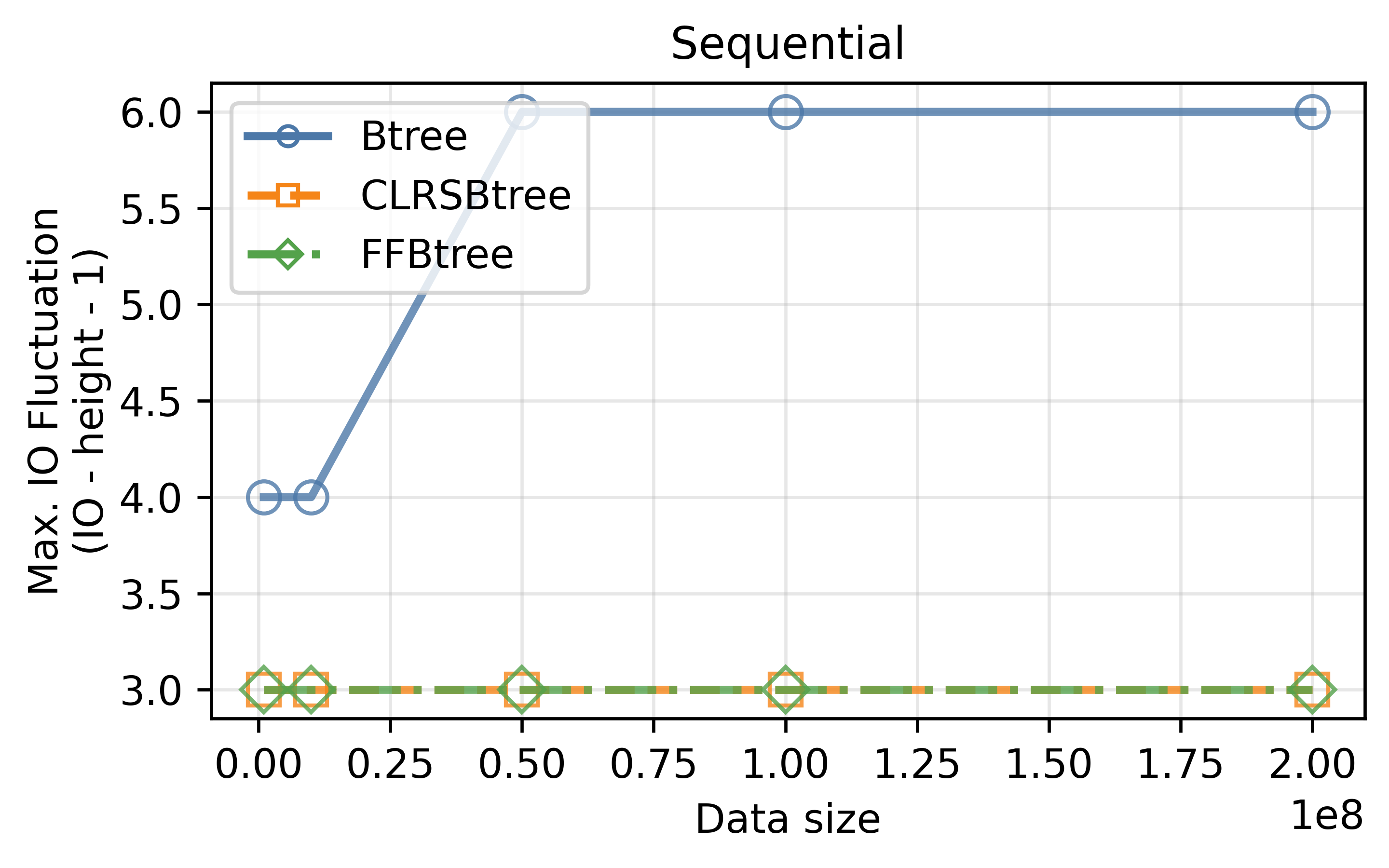}
        \caption{}
        \label{fig:scale_seq}
    \end{subfigure}
    \hfill
    \begin{subfigure}{0.325\linewidth}
        \centering
        \includegraphics[width=\linewidth]{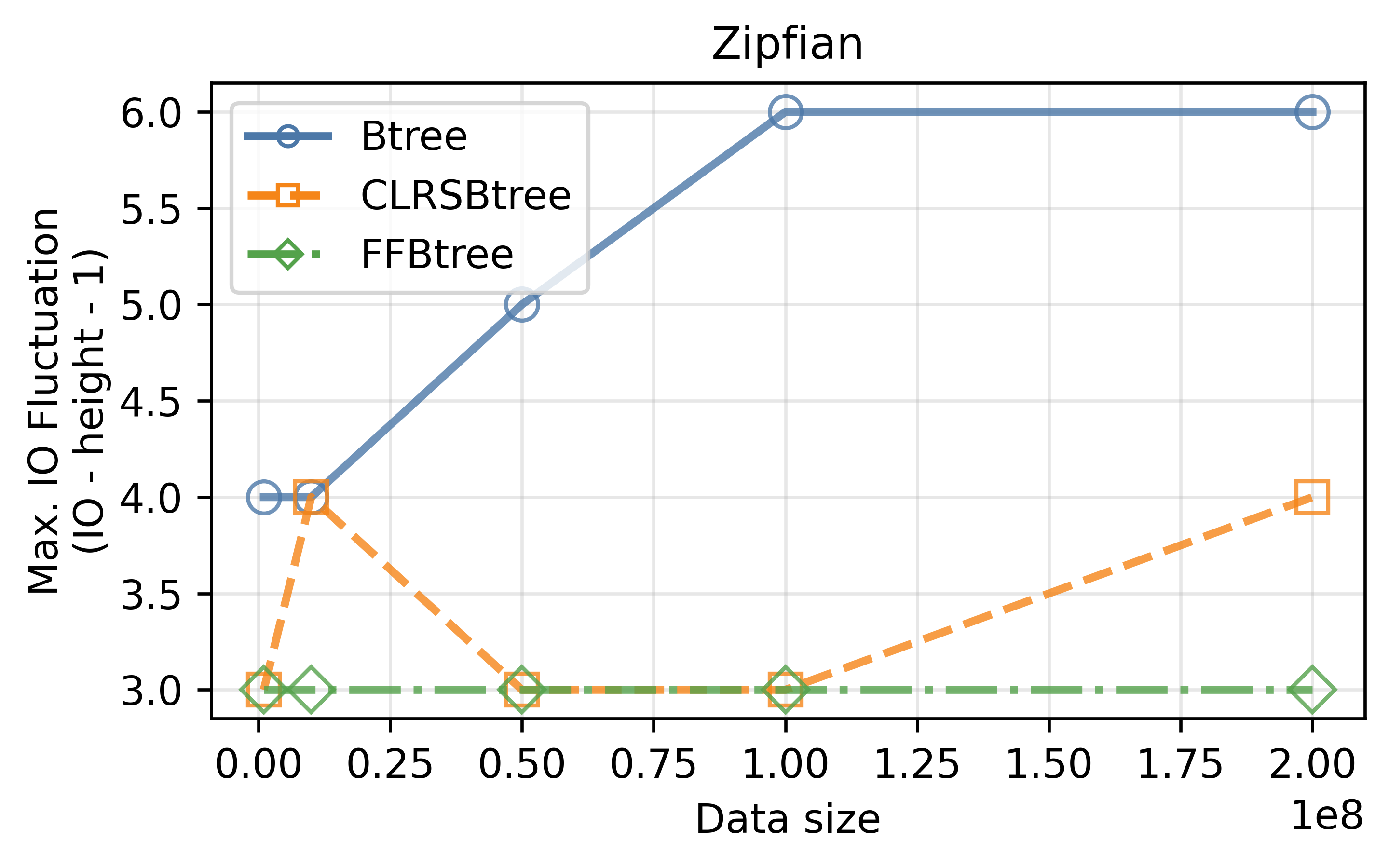}
        \caption{}
        \label{fig:scale_zipf}
    \end{subfigure}
    \caption{Maximum per-insert I/O fluctuation across dataset sizes.}\label{fig:scale}
\end{figure}

We compare five dataset sizes (1M, 10M, 50M, 100M, 200M). The \bplustree{} exhibits the largest maximum I/O fluctuations across sizes, and its random-workload fluctuation grows with height. The CLRSBtree and the \sys{} remain at a maximum fluctuation of three across sizes for random and sequential workloads. Under Zipfian inserts, The CLRSBtree shows higher maxima at some sizes due to adversarial hotspots, while the \sys{} remains bounded by one split per insertion.
Across all sizes, the maximum fluctuation of the \sys{} does not depend on the dataset size, which aligns with the constant-split guarantee. This behavior is especially visible at larger datasets, where the \bplustree{}’s fluctuation scales with height but the \sys{} remains flat.

\subsection{Tradeoff}
Early splitting trades space utilization for stability, so we measure leaf and non-leaf node utilization.
Across workloads, the percentage of nodes below 50\% utilization remains low and does not grow with dataset size, indicating that stability gains come from marginally earlier splits rather than systemic over-splitting.

\begin{figure}
    \centering
    \includegraphics[width=0.8\linewidth]{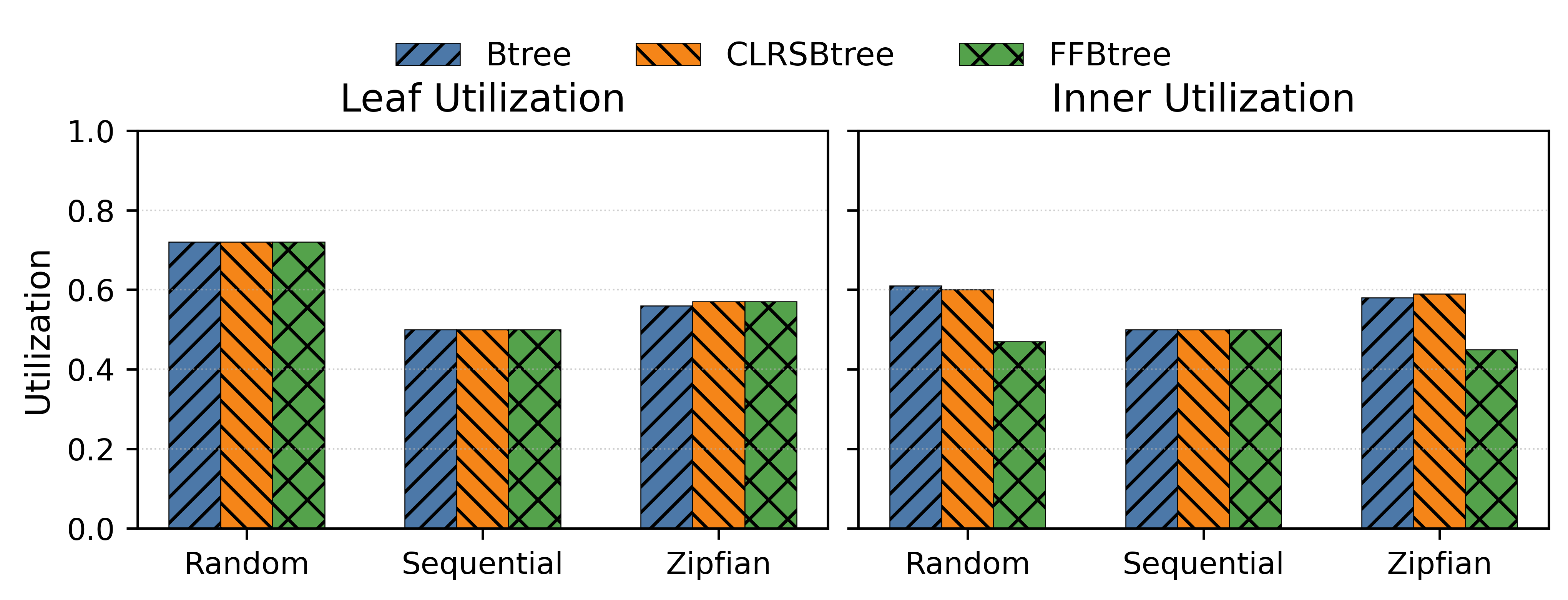}
    \caption{Leaf and inner-node utilization under early splitting.}
    \label{fig:util}
\end{figure}

Figure~\ref{fig:util} shows that the \sys{} trades a modest decrease in non-leaf utilization for stability, while leaf node utilization remains close to the baselines. 
This tradeoff aligns with the  average latency increase observed in Section~\ref{subsec:cc}.

\section{Discussion and Limitations}\label{sec:discussion}
The current study focuses on insertion-only workloads and a fixed node size. While this isolates split behavior, real systems also handle deletes, updates, reads, and variable-size payloads. Extending the \sys{} to support mixed workloads without reintroducing propagation (e.g., via carefully scheduled merges or redistribution) is an important next step. 

We also note that fluctuation-free behavior depends on maintaining correct critical-node metadata. In practice, this requires careful handling of node splits, concurrent updates, and crash recovery. Our implementation uses OLC with version validation, which is efficient but may still restart under high contention. Alternative concurrency-control schemes or batched updates to critical metadata could further reduce restarts, especially in write-heavy workloads.
Finally, the \sys{} trades a small amount of space utilization for determinism. While we observe only modest utilization loss, the exact tradeoff may vary with page size and key distribution. A practical deployment could expose a tunable threshold for critical marking, allowing operators to balance utilization against fluctuation bounds based on SLO requirements.

\section{Conclusion}\label{sec:con}
We introduce the \sys{}, a fluctuation-free \bplustree{} insertion algorithm that eliminates split propagation by proactively splitting only the bottommost critical node. We formalize fluctuation for index operations, prove that the \sys{} bounds split cost to a constant, and demonstrate through simulation and real prototyping that the \sys{} substantially reduces I/O and latency variability with only a small increase in average latency. Future work includes extending the approach to deletions and other balanced-tree variants, and exploring system-level scheduling that exploits predictable maintenance costs.



\bibliographystyle{acm}
\bibliography{sample}

\end{document}